\def\hmax{h_{\text{MAX}}}
\def\Hb{\mathbf{H}}
\def\hh{\hat{h}}
\def\Hhb{\mathbf{\hat{H}}}
\def\Hs{\mathsf{H}}
\def\Hhs{\mathsf{\hat{H}}}
\def\Cbb{\mathbb{C}}
\def\P{\mathbb{P}}
\def\E{\mathbb{E}}
\newtheorem{theorem}{Theorem}
\newtheorem{lemma}{Lemma}
\newtheorem{corollary}{Corollary}
\newtheorem{definition}{Definition}
\newtheorem{remark}{Remark}
\begin{document}

\title{Ergodic Interference Alignment}
\author{Bobak Nazer, \IEEEmembership{Member, IEEE}, Michael Gastpar, \IEEEmembership{Member, IEEE}, \\Syed Ali Jafar, \IEEEmembership{Senior Member, IEEE}, Sriram Vishwanath, \IEEEmembership{Senior Member, IEEE}
\thanks{B. Nazer is with the Department of Electrical and Computer Engineering, Boston University, Boston, MA, 02215 USA (email: bobak@bu.edu). M. Gastpar is with the Department of Electrical Engineering and Computer
Sciences, University of California, Berkeley, CA 94720 USA, and with the School of Computer and Communication Sciences, Ecole Polytechnique F\'ed\'erale (EPFL), 1015 Lausanne, Switzerland (e-mail: gastpar@eecs.berkeley.edu). S. A. Jafar is with the Department of Electrical Engineering and Computer Science, University of California, Irvine, Irvine, CA, 92697-2625 (email: syed@uci.edu). S. Vishwanath is with the Department of Electrical and Computer Engineering, University of Texas, Austin, Austin, TX, 78712, USA (email: sriram@ece.utexas.edu).}
\thanks{B. Nazer and M. Gastpar were supported by NSF grants CCR-0347298, CNS-0627024, and CCF-0830428. M. Gastpar was also supported by the European ERC Starting Grant 259530-ComCom. S. A. Jafar was supported by NSF grant CCF-0830809, ONR YIP grant N00014-08-1-0872, and ONR grant N00014-12-1-0067. S. Vishwanath was supported by  ARO YIP grant 52491CI. The material in this paper was presented in part at the IEEE International Symposium on Information Theory, Seoul, South Korea, July 2009 and at the 47th Annual Allerton Conference on Communications, Control, and Computing, September 2009.}}
\markboth{To appear IEEE Trans. Info. Theory.}{~}

\maketitle

\begin{abstract}
This paper develops a new communication strategy, \textit{ergodic interference alignment}, for the $K$-user interference channel with time-varying fading. At any particular time, each receiver will see a superposition of the transmitted signals plus noise. The standard approach to such a scenario results in each transmitter-receiver pair achieving a rate proportional to $1/K$ its interference-free ergodic capacity. However, given two well-chosen time indices, the channel coefficients from interfering users can be made to exactly cancel. By adding up these two observations, each receiver can obtain its desired signal without any interference. If the channel gains have independent, uniform phases, this technique allows each user to achieve at least $1/2$ its interference-free ergodic capacity at any signal-to-noise ratio. Prior interference alignment techniques were only able to attain this performance as the signal-to-noise ratio tended to infinity. Extensions are given for the case where each receiver wants a message from more than one transmitter as well as the ``X channel'' case (with two receivers) where each transmitter has an independent message for each receiver. Finally, it is shown how to generalize this strategy beyond Gaussian channel models. For a class of finite field interference channels, this approach yields the ergodic capacity region.
 \end{abstract}

\begin{keywords}
Interference channels, interference alignment, time-varying channels
\end{keywords}

\section{Introduction}
Consider $K$ transmitter-receiver pairs that communicate over a wireless channel on the same frequency band. If the users are not allowed to cooperate, it is clear that concurrent transmissions will interfere with one another. The key question is at what rate can each pair communicate in the presence of interference from all other pairs. If only one pair is active, this reduces to an interference-free point-to-point communication problem for which the capacity is known. Intuitively, it seems that the best possible scheme for $K$ active pairs would allow each transmitter to operate at roughly $1/K$ its interference-free capacity. Surprisingly, through a new strategy known as \textit{interference alignment} \cite{mmk08, cj08}, it is possible to have each transmitter operate all the way up to $1/2$ its interference-free capacity. The basic idea is that, from the viewpoint of each receiver, the interference should look as if it originated from a single user. For the interference channel, Cadambe and Jafar developed a vector space alignment strategy over many parallel channels (which can be obtained by using multiple frequency bands or time instances). The end result is that each receiver sees its desired signal in half the dimensions while the interfering signals occupy the other half and each user can approach $1/2$ its interference-free capacity as the signal-to-noise ratio (SNR) goes to infinity \cite{cj08}. In this paper, we propose a simple new strategy, \textit{ergodic interference alignment}, that permits each user to achieve at least half its interference-free capacity at any SNR. At its heart, our scheme relies on the availability of time-varying, independent channel coefficients that are drawn from distributions with uniform phase.

We now provide a high-level description of our scheme. Assume that the $K$ transmitters send out signals $X_1, X_2, \ldots, X_K$ at time $t$ under channel matrix $\mathbf{H} = \{h_{k\ell}\}$ and that each receiver observes:
\begin{align}
Y_k[t] = \sum_{\ell = 1}^K{h_{k \ell} X_\ell} + Z_k[t]
\end{align} where $Z_k[t]$ is independent and identically distributed (i.i.d.) additive noise. The transmitters wait until the complementary channel matrix $\Hb_{C}$ occurs at time $t_C$ where 
\begin{align}
\Hb_C = \left[
\begin{array}{cccc}
 h_{11} & -h_{12}  & \cdots& -h_{1K}  \\
 -h_{21} & h_{22}  & \cdots& -h_{2K}  \\
 \vdots & \vdots & \ddots & \vdots \\
 -h_{K1} & -h_{K2} & \cdots & h_{KK}
\end{array}
\right] 
\end{align} and then resend $X_1, X_2, \ldots, X_K$. This gives each receiver access to
\begin{align}
Y_k[t_C] = h_{kk} X_k - \sum_{\ell \neq k}{h_{k\ell} X_\ell} + Z_k[t_C]
\end{align} which it can add to $Y_k[t]$ to get
\begin{align}
Y_k[t] + Y_k[t_C] = 2 h_{kk} X_k + Z_k[t] + Z_k[t_C] \ .
\end{align} So, for the cost of two channel uses, we can get an interference-free channel. The observant reader will have noticed that, for most reasonable fading distributions, any single $\Hb_C \in \Cbb^{K \times K}$ has measure zero and will effectively never occur.  Fortunately, for our purposes, it is enough to wait until the channel matrix is fairly close to $\Hb_C$ to retransmit the signals. The description above is meant only to illustrate the key principles at work and we will make our analysis rigorous in the sequel. 

In some scenarios, each receiver may wish to recover more than one of the transmitted messages. Assume each receiver wants $L$ messages out of the $K > L$ messages that were transmitted. We can think of these messages as unknown variables and allocate one additional unknown variable for the remaining transmitted messages which act as interference. If the transmitters send out the same signals over $L + 1$ appropriately chosen channel matrices, the receivers will have enough ``equations''  to eliminate the interference and solve for their desired messages.  We will generalize our ergodic alignment scheme to this scenario and show that it can also be applied to an X channel with two receivers that each want an independent message from each transmitter. 

In the Gaussian case, each receiver can simply add up its observations from paired channel matrices and then try to recover its desired messages. This is because the desired signal is combined \textit{coherently} while the noise is not which boosts the effective signal-to-noise ratio (SNR). For other channel models, it may be beneficial to remove the noise prior to combining the two observations. We will demonstrate this through the derivation of the capacity region of a finite field interference channel with time-varying channel coefficients. Here, the optimal strategy is to reliably decode equations of the transmitted messages using the computation codes developed in \cite{ng07IT} and then solve for the desired messages. 

\subsection{Related Work}

To date, the capacity region of the Gaussian interference channel is unknown except in some special cases. If the interference strength at each receiver is very strong, then it has been shown that it is optimal to first decode the interference and then extract the desired message\cite{carleial75,sato81,hk81,ssep11}. Conversely, if the interference strength is very weak, it is optimal to treat the interference as noise \cite{mk09,skc09,av09IT}. For the two-user case, Etkin, Tse, and Wang showed that a version of the Han-Kobayashi scheme \cite{hk81} is approximately optimal and achieves the capacity region to within one bit \cite{etw08}.

For interference channels with $K > 2$ transmitter-receiver pairs, interference alignment \cite{mmk08,cj08,js08} offers substantial rate gains. Specifically, Cadambe and Jafar \cite{cj08} showed that $K/2$ degrees-of-freedom are attainable using an alignment scheme that exploits instantaneous channel state information at the transmitters (CSIT), coding across many parallel channels \cite{cj09ITa,ssep11}, and taking a high SNR limit. Subsequent work has focused on developing alignment strategies that can operate outside of this regime. 

One natural question following the results in \cite{cj08} is whether the same gains are attainable at finite SNR. This paper answers this question in the affirmative through a new alignment strategy (under an additional condition on the channel coefficient phases). In concurrent work to our own, \"Ozg\"ur and Tse examined the interference alignment scheme of Cadambe and Jafar \cite{cj08} and found a lower bound on the rate at finite SNR for phase fading \cite{ot09}. In parallel, Jeon and Chung developed an alignment strategy for finite field interference networks \cite{jc09}. For a single-hop interference network, they match up pairs of channel matrices as we do to get interference-free channels. For a multi-hop network, they use subsequent hops to invert the channel matrix from the first hop. This technique was subsequently used to characterize the degrees-of-freedom region for a broad class of layered Gaussian relay networks \cite{jcj11}. Earlier work by Grokop, Tse, and Yates proposed an alignment scheme for line-of-sight interference channels with provably good rates at finite SNR \cite{gty11}. 

Several groups have recently applied the techniques developed here to derive tighter capacity scaling laws for dense wireless networks. Jafar showed that for transmitter-receiver pairs distributed uniformly in the unit square, ergodic alignment yields the exact capacity as the network size goes to infinity \cite{jafar11}. Subsequent work by Aldridge, Johnson, and Piechocki extended this result to a broader class of node placement distributions \cite{ajp09}. Niesen studied multi-hop networks with $K$ nodes with unicast and multicast traffic and found upper and lower bounds that differed by only a $\log K$ factor \cite{niesen11}. For the multiple-access wiretap channel, Bassily and Ulukus have developed a variant of our technique that pairs channel realizations to minimize the information leaked to the eavesdropper \cite{bu10}. 

Another natural question is whether interference alignment is possible over static channels. Bresler, Parekh, and Tse demonstrated that alignment can be achieved on the signal scale using lattice codes and employed this strategy to approximate the capacity of the many-to-one (and one-to-many) interference channel to within a constant number of bits \cite{bpt10}. Lattice-based codes have also been used to characterize a ``very strong'' regime \cite{sjvj08}, the generalized degrees-of-freedom \cite{jv10}, and the approximate sum capacity \cite{oen12} for symmetric $K$-user interference channels. Recent efforts have attempted to generalize this approach to a broader class of channel gains \cite{jv10ITW,oe11}. Motahari \textit{et al.} found that $K/2$ degrees-of-freedom are achievable (up to a set of channel matrices of measure zero) by embedding alignment vectors into scalar irrationals \cite{mgmk09}. However, for rational coefficients, the degrees-of-freedom is strictly less than $K/2$ as shown by Etkin and Ordentlich \cite{eo09}. 

Recent work has also strived to characterize the gains of linear alignment strategies using limited channel realizations. For $3$-user interference channels, Cadambe, Jafar, and Wang showed that linear precoding combined with asymmetric complex signaling offers alignment gains for a single channel realization \cite{cjw10}. Subsequent work by Bresler and Tse found the degrees-of-freedom for symmetric linear alignment for an arbitrary number of channel realizations \cite{bt09}. More recently, several groups have developed feasibility conditions on linear alignment over a single channel realization of a $K$-user MIMO interference channel \cite{bct11,rll11,wgj11B}.

Another interesting line of recent work has developed alignment schemes that do not require instantaneous CSIT. For instance, if the channel coefficients are appropriately correlated, alignment is possible without any CSIT \cite{jafar12}. For independent channel coefficients, alignment is still possible with delayed CSIT \cite{mt10,mjs12,wgj11,vv11,gmk11}, although, in general, the gains are not as high as in the instantaneous case. 

For a more comprehensive overview of the alignment literature, we point to a recent survey \cite{jafar11}.

\subsection{Paper Organization}

The next section provides a formal problem statement for the time-varying interference channel and Section \ref{s:chanquant} develops a quantization scheme that will be useful for our analysis. In Section \ref{s:ergodic}, we show that each receiver can achieve at least half its interference-free rate at any SNR and, in Section \ref{s:delay}, we discuss the delay incurred by this scheme. Section \ref{s:multicast} generalizes ergodic alignment to the case where each receiver wants more than one message. In Section \ref{s:xchannel}, we attempt to extend our scheme to the X channel and give a scheme that works for the $2$-receiver case. All of the prior schemes operate on the symbol level; in Section \ref{s:finitefield}, we show that for non-Gaussian channels, sometimes each receiver should denoise its received signals prior to combining them. Finally, Appendix \ref{s:outer} provides upper bounds for the Gaussian case and Appendix \ref{s:compute} reviews a useful result from computation coding.

\section{Time-Varying Gaussian Interference Channel} \label{s:probstateic}

We begin with some notational conventions. We will denote vectors using boldface lowercase letters and matrices with boldface uppercase letters. Realizations of a random variable are (sometimes) denoted using sans-serif font. For instance, $\mathbb{P}(\mathbf{H} = \mathsf{H})$ denotes the probability that the random matrix $\mathbf{H}$ takes on the value $\mathsf{H}$. All logarithms are to base $2$.

There are $K$ transmitter-receiver pairs that communicate across a narrowband wireless channel over $T$ time steps (see Figure \ref{f:museric}). 

\begin{figure}[h]
\begin{center}
\psset{unit=0.75mm}
\begin{pspicture}(0,-20)(110,45)

\rput(1,32){$m_1$} \psline{->}(5,32)(10,32) \psframe(10,27)(21,37)
\rput(16,32){$\mathcal{E}_1$} \rput(29,35.5){$X_1[t]$}
\psline{->}(21,32)(37,32)

\rput(1,12){$m_2$} \psline{->}(5,12)(10,12) \psframe(10,7)(21,17)
\rput(16,12){$\mathcal{E}_2$} \rput(29,15.5){$X_2[t]$}
\psline{->}(21,12)(37,12)

\rput(16,-1){$\vdots$}

\rput(0.5,-18){$m_K$} \psline{->}(5,-18)(10,-18) \psframe(10,-23)(21,-13)
\rput(16,-18){$\mathcal{E}_K$} \rput(29,-14.5){$X_K[t]$}
\psline{->}(21,-18)(37,-18)


\psframe(37,-23)(65,37)
\rput(51,7){\Large{$\mathbf{H}(t)$}}

\rput(24,0){
\psline(41,32)(47,32)
\pscircle(49,32){2} \psline{-}(48,32)(50,32)
\psline{-}(49,31)(49,33) \psline{->}(49,41)(49,34) \rput(49,44){$Z_1[t]$}
\psline{->}(51,32)(65,32) \rput(57,35.5){$Y_1[t]$}

\psline(41,12)(47,12)
\pscircle(49,12){2} \psline{-}(48,12)(50,12)
\psline{-}(49,11)(49,13) \psline{->}(49,21)(49,14) \rput(49,24){$Z_2[t]$}
\psline{->}(51,12)(65,12) \rput(57,15.5){$Y_2[t]$}

\psline(41,-18)(47,-18)
\pscircle(49,-18){2} \psline{-}(48,-18)(50,-18)
\psline{-}(49,-19)(49,-17) \psline{->}(49,-9)(49,-16) \rput(49.25,-5.75){$Z_K[t]$}
\psline{->}(51,-18)(65,-18) \rput(57,-14.5){$Y_K[t]$}

\psframe(65,27)(77,37) \rput(71.5,32){$\mathcal{D}_1$}
\psline{->}(77,32)(82,32)  \rput(86,32.25){$\hat{m}_1$}

\psframe(65,7)(77,17) \rput(71.5,12){$\mathcal{D}_2$}
\psline{->}(77,12)(82,12) \rput(86,12.25){$\hat{m}_2$}

\rput(71.5,-1){$\vdots$}

\psframe(65,-23)(77,-13) \rput(71.5,-18){$\mathcal{D}_K$}
\psline{->}(77,-18)(82,-18) \rput(87,-17.75){$\hat{m}_K$}

}

\end{pspicture}
\end{center}
\caption{$K$-user Gaussian interference channel with time-varying channel coefficients.} \label{f:museric}
\end{figure}
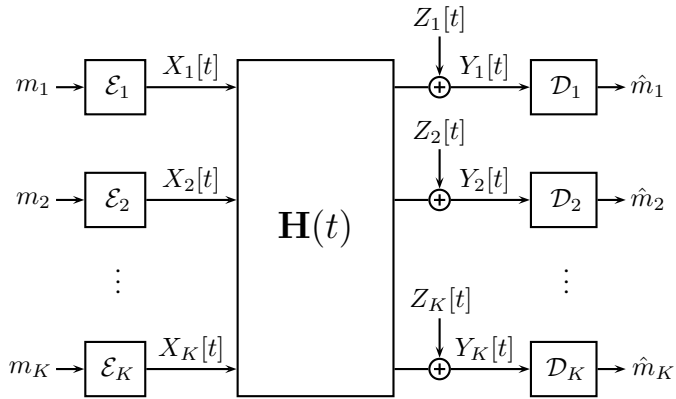

\begin{definition}[Messages] Each transmitter has a \textit{message} $m_\ell$ chosen independently and uniformly from the set $\{1,2,\ldots, 2^{n\tilde{R}_\ell} \}$ for some $\tilde{R}_\ell \geq 0$.
\end{definition}

\begin{definition}[Encoders] Each transmitter has an \textit{encoding function}, $\mathcal{E}_\ell: \{1,2,\ldots,2^{n\tilde{R}_\ell}\} \rightarrow \mathbb{C}^T$, that maps its message $m_\ell$ into a length $T$ channel input $\{X_\ell[t]\}_{t=1}^T$ that satisfies the \textit{power constraint}
\begin{align}
\frac{1}{T} \sum_{t=1}^T{\big|X_\ell[t]\big|^2} \leq P \ .
\end{align}
\end{definition}

\begin{definition}[Channel Model] The channel output observed by each receiver is a noisy linear combination of the inputs
\begin{align}
Y_k[t] = \sum_{\ell=1}^K{h_{k \ell}[t] X_\ell[t]} + Z_k[t]
\end{align} where the $h_{k\ell}[t]$ are time-varying channel coefficients and $Z_k[t]$ is additive i.i.d. noise and drawn from a circularly symmetric complex Gaussian distribution with unit variance, $Z_k[t] \sim \mathcal{CN}(0, 1)$. Let $\mathbf{H}[t] = \left\{h_{k\ell}[t]\right\}$ denote the matrix of channel coefficients at time $t$. Each entry of this matrix is independent of the others for all $t$ and the channel matrix itself is i.i.d. across time,
\begin{align}
f_{\Hb}(\Hs) &= \prod_{k=1}^K \prod_{\ell=1}^K f_{h_{k\ell}}(\mathsf{h}_{k\ell}) \\
f_{\Hb[1]\cdots\Hb[T]}(\mathsf{H}_1, \ldots, \mathsf{H}_T) &= \prod_{t=1}^T f_{\Hb}(\Hs_t) \ . 
\end{align} We assume that the phase of each channel coefficient is drawn according to a uniform distribution and independent from its magnitude, 
\begin{align}
f_{h_{k\ell}}(\mathsf{h}) = f_{h_{k\ell}}( e^{jb} \mathsf{h})~~~\forall \mathsf{h} \in \mathbb{C},b \in [0, 2\pi)\ .
\end{align} 

\begin{remark}
Although our alignment scheme requires the phases to be drawn from uniform distributions, this requirement can be relaxed by changing how channel matrices are paired. See \cite{gj12} for a recent study of ergodic alignment under asymmetric phase distributions.   
\end{remark}

The transmitted symbols at time $t$ can depend on the channel realizations up to and including time $t$. This is the usual notion of causal CSIT. Let $\mathbf{H}[t] = \left\{h_{k\ell}[t]\right\}$ denote the matrix of channel coefficients at time $t$.
\end{definition}

\begin{remark} Channel coefficients that change at every time step are often referred to as a fast fading process. For our considerations, we just need that there is sufficient variation of the channel coefficients over the duration of a codeword. The assumption that the channel coefficients are i.i.d. across time is taken to simplify the analysis. 
\end{remark}

\begin{remark}  We can model the effect of different power constraints at each transmitter and different noise variances at each receiver by modifying the coefficient probability distributions.
\end{remark}

\begin{definition}[Decoders]\label{d:decoders} Each receiver has a \textit{decoding function}, $\mathcal{D}_k: \mathbb{C}^T \rightarrow \{1,2,\ldots,2^{n\tilde{R}_k}\}$, that maps its length $T$ observed channel output $\{Y_k[t]\}_{t=1}^T$ into an estimate $\widehat{m}_k$ of its desired message $m_k$. 
\end{definition}

\begin{definition}[Achievable Rates]\label{d:rates} We say that a rate tuple $(R_1,R_2,\ldots,R_K)$ is \textit{achievable} if for all $\epsilon > 0$ and $n$ large enough there exist channel encoding and decoding functions $\mathcal{E}_1, \ldots, \mathcal{E}_K, \mathcal{D}_1, \ldots, \mathcal{D}_K$ such that
\begin{align}
&\tilde{R}_k > R_k - \epsilon,~~~k=1,2,\ldots,K, \\
&\P\Big(\{\widehat{m}_1 \neq m_1\} \cup \ldots \cup \{\widehat{m}_K \neq m_K\} \Big) < \epsilon \ .
\end{align}
\end{definition}

\begin{definition}[Capacity] The \textit{capacity region} is the closure of the set of all achievable rate tuples. \end{definition}

\section{Channel Quantization} \label{s:chanquant}
Our scheme relies on matching up channel matrices so that the interference terms cancel out when we sum up the matrices. Clearly, given any channel matrix $\mathbf{H}$, the probability that its exact complement $\mathbf{H}_C$ will occur is zero (for continuous-valued fading). Thus, we can only match up matrices \textit{approximately}. We will accomplish this by quantizing the channel coefficients and matching up matrices based on their quantized values. By taking finer and finer quantizations, we can achieve the target rate in the limit. 

We also need to ensure that nearly all matrices that occur will be successfully paired up with their complements. Since the coefficients are drawn i.i.d. from distributions with uniform phase, the probability that the complement of a channel matrix occurs in a given time step is the same as the probability that the original matrix occurs. We will constrain the quantized matrices to lie within a finite set by throwing out any matrices with coefficients larger than a threshold. Finally, we choose the blocklength to be large enough so that the sequence of quantized channel matrices is strongly typical with high probability. This means that the empirical distribution of channel matrices will be close to the true distribution which implies that nearly all matrices can be matched.

Let $\hmax$ denote the channel coefficient threshold. We will ignore any channel matrix that contains at least one coefficient with magnitude larger than $\hmax$. Let 
\begin{align}
\mathcal{L} \triangleq \left\{ \mathbf{H} \in \mathbb{C}^{K \times K}: |h_{k\ell}| > \hmax~~\mbox{for some }k,\ell \right\}
\end{align} denote the set of all matrices that violate the threshold and let 
\begin{align}
\rho \triangleq \P\big(\Hb[t] \in \mathcal{L}\big)
\end{align} be the probability of some matrix in this set occurring at time $t$. Note that $\rho$ is a decreasing function of $\hmax$.

We now define the quantization function $q$ for the channel coefficients. The complex plane up to distance $h_{\text{MAX}}$ from the origin is divided up into $\kappa$ disjoint rings of equal width. These rings are further subdivided into equal segments based on $\eta$ angles spaced equally between $0$ and $2\pi$. The parameters $\kappa$ and $\eta$ are chosen to be large enough such that the maximum distance between any two points within a segment is $\delta$ where $\delta > 0$ will be specified later. Each segment is a quantization cell for the channel coefficients which we represent by its centroid. Thus, $q\big(h_{k\ell}[t])$ maps $h_{k\ell}[t]$ to the centroid within its segment if $\big| h_{k\ell}[t]\big| \leq \hmax$. If $\big| h_{k\ell}[t]\big| > \hmax$, then $q\big(h_{k\ell}[t])$ maps to an erasure symbol $\Gamma$. See Figure \ref{f:quant} for an illustration of this quantization scheme. 

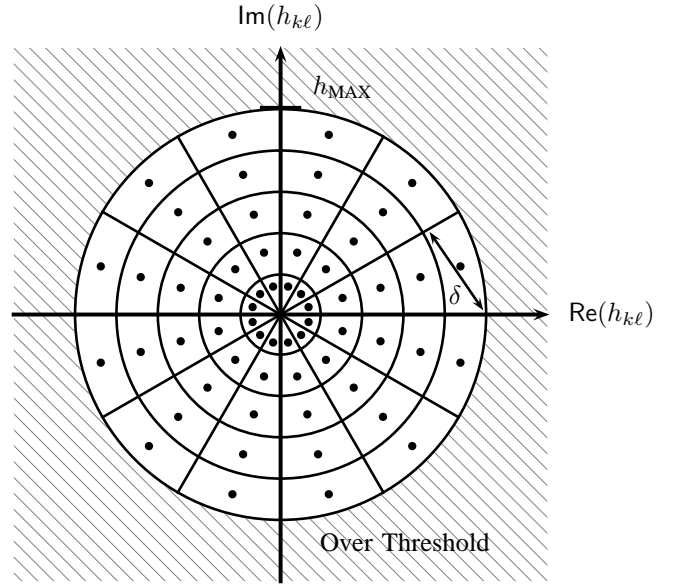
\begin{figure}[h]
\begin{center}
\psset{unit=0.55mm}
\begin{pspicture}(-65,-65)(85,80)

\psframe[hatchwidth=0.3pt,hatchcolor=gray,fillstyle=vlines,linecolor=white](-65,-65)(65,65)

\pscircle[fillstyle=solid,fillcolor=white,linecolor=white](0,0){50}

\psline[linewidth=1.5pt]{->}(-65,0)(65,0)
\psline[linewidth=1.5pt]{->}(0,-65)(0,65)
\rput(80,0){$\mathsf{Re}(h_{k\ell})$}
\rput(0,72){$\mathsf{Im}(h_{k\ell})$}

\rput{0}(0,0){
\rput{30}(0,0){
\psline[linewidth=1pt](-50,0)(50,0)
}
\rput{60}(0,0){
\psline[linewidth=1pt](-50,0)(50,0)
}
\rput{90}(0,0){
\psline[linewidth=1pt](-50,0)(50,0)
}
\rput{120}(0,0){
\psline[linewidth=1pt](-50,0)(50,0)
}
\rput{150}(0,0){
\psline[linewidth=1pt](-50,0)(50,0)
}
\rput{180}(0,0){
\psline[linewidth=1pt](-50,0)(50,0)
}
}

\rput{15}(0,0){
\pscircle[fillstyle=solid,fillcolor=black](7,0){1}
\pscircle[fillstyle=solid,fillcolor=black](15.5,0){1}
\pscircle[fillstyle=solid,fillcolor=black](25,0){1}
\pscircle[fillstyle=solid,fillcolor=black](35,0){1}
\pscircle[fillstyle=solid,fillcolor=black](45,0){1}
}
\rput{45}(0,0){
\pscircle[fillstyle=solid,fillcolor=black](7,0){1}
\pscircle[fillstyle=solid,fillcolor=black](15.5,0){1}
\pscircle[fillstyle=solid,fillcolor=black](25,0){1}
\pscircle[fillstyle=solid,fillcolor=black](35,0){1}
\pscircle[fillstyle=solid,fillcolor=black](45,0){1}}
\rput{75}(0,0){
\pscircle[fillstyle=solid,fillcolor=black](7,0){1}
\pscircle[fillstyle=solid,fillcolor=black](15.5,0){1}
\pscircle[fillstyle=solid,fillcolor=black](25,0){1}
\pscircle[fillstyle=solid,fillcolor=black](35,0){1}
\pscircle[fillstyle=solid,fillcolor=black](45,0){1}}
\rput{105}(0,0){
\pscircle[fillstyle=solid,fillcolor=black](7,0){1}
\pscircle[fillstyle=solid,fillcolor=black](15.5,0){1}
\pscircle[fillstyle=solid,fillcolor=black](25,0){1}
\pscircle[fillstyle=solid,fillcolor=black](35,0){1}
\pscircle[fillstyle=solid,fillcolor=black](45,0){1}}
\rput{135}(0,0){
\pscircle[fillstyle=solid,fillcolor=black](7,0){1}
\pscircle[fillstyle=solid,fillcolor=black](15.5,0){1}
\pscircle[fillstyle=solid,fillcolor=black](25,0){1}
\pscircle[fillstyle=solid,fillcolor=black](35,0){1}
\pscircle[fillstyle=solid,fillcolor=black](45,0){1}}
\rput{165}(0,0){
\pscircle[fillstyle=solid,fillcolor=black](7,0){1}
\pscircle[fillstyle=solid,fillcolor=black](15.5,0){1}
\pscircle[fillstyle=solid,fillcolor=black](25,0){1}
\pscircle[fillstyle=solid,fillcolor=black](35,0){1}
\pscircle[fillstyle=solid,fillcolor=black](45,0){1}}
\rput{195}(0,0){
\pscircle[fillstyle=solid,fillcolor=black](7,0){1}
\pscircle[fillstyle=solid,fillcolor=black](15.5,0){1}
\pscircle[fillstyle=solid,fillcolor=black](25,0){1}
\pscircle[fillstyle=solid,fillcolor=black](35,0){1}
\pscircle[fillstyle=solid,fillcolor=black](45,0){1}}
\rput{225}(0,0){
\pscircle[fillstyle=solid,fillcolor=black](7,0){1}
\pscircle[fillstyle=solid,fillcolor=black](15.5,0){1}
\pscircle[fillstyle=solid,fillcolor=black](25,0){1}
\pscircle[fillstyle=solid,fillcolor=black](35,0){1}
\pscircle[fillstyle=solid,fillcolor=black](45,0){1}}
\rput{255}(0,0){
\pscircle[fillstyle=solid,fillcolor=black](7,0){1}
\pscircle[fillstyle=solid,fillcolor=black](15.5,0){1}
\pscircle[fillstyle=solid,fillcolor=black](25,0){1}
\pscircle[fillstyle=solid,fillcolor=black](35,0){1}
\pscircle[fillstyle=solid,fillcolor=black](45,0){1}}
\rput{285}(0,0){
\pscircle[fillstyle=solid,fillcolor=black](7,0){1}
\pscircle[fillstyle=solid,fillcolor=black](15.5,0){1}
\pscircle[fillstyle=solid,fillcolor=black](25,0){1}
\pscircle[fillstyle=solid,fillcolor=black](35,0){1}
\pscircle[fillstyle=solid,fillcolor=black](45,0){1}}
\rput{315}(0,0){
\pscircle[fillstyle=solid,fillcolor=black](7,0){1}
\pscircle[fillstyle=solid,fillcolor=black](15.5,0){1}
\pscircle[fillstyle=solid,fillcolor=black](25,0){1}
\pscircle[fillstyle=solid,fillcolor=black](35,0){1}
\pscircle[fillstyle=solid,fillcolor=black](45,0){1}}
\rput{345}(0,0){
\pscircle[fillstyle=solid,fillcolor=black](7,0){1}
\pscircle[fillstyle=solid,fillcolor=black](15.5,0){1}
\pscircle[fillstyle=solid,fillcolor=black](25,0){1}
\pscircle[fillstyle=solid,fillcolor=black](35,0){1}
\pscircle[fillstyle=solid,fillcolor=black](45,0){1}}

\psline[linewidth=1.5pt](-5,50)(5,50)
\rput(15,55){$h_{\text{MAX}}$}

\psline[linewidth=1.0pt]{<->}(36,20)(49,1)
\rput(42.15,4.85){$\delta$}

\pscircle[linewidth=1pt](0,0){10}
\pscircle[linewidth=1pt](0,0){20}
\pscircle[linewidth=1pt](0,0){30}
\pscircle[linewidth=1pt](0,0){40}
\pscircle[linewidth=1pt](0,0){50}

\rput(30,-55){Over Threshold}

\end{pspicture}
\end{center}
\caption{Quantizing complex-valued channel coefficients $h_{k\ell}$ with magnitude less  than $h_{\text{MAX}}$ to a finite set. Here, the number of rings is $\kappa = 5$ and the number of segments per ring is $\eta = 12$. The maximum distance between any two points in a quantization cell is $\delta$.} \label{f:quant}
\end{figure}

Throughout the paper, we will match up channel coefficients based on their quantization cells. For notational convenience, let
\begin{align}
\hh_{k\ell}[t] \triangleq q\big(h_{k\ell}[t]\big)
\end{align} denote the quantized channel coefficients.

One important aspect of this quantization scheme is that each segment has the same probability of occurring as any other segment within the same ring. Note that this depends strongly on the assumptions of uniform phase and the independence of phase and magnitude. Ideally, we would pair up channel coefficients to cancel out the interference exactly. However, as explained above, this is not possible at any finite blocklength. The following lemma bounds the effect of combining channel coefficients based on their quantization cells.

\begin{lemma}\label{l:quantbounds}
Let $h_{k\ell}[t_1],h_{k\ell}[t_2],\ldots,h_{k\ell}[t_N]$ be channel coefficients with magnitudes less than $h_{\text{MAX}}$. Then, for any $a_n \in \Cbb$, \begin{align}
\left|\sum_{n=1}^N a_n h_{k\ell}[t_n] \right| &\leq \left| \sum_{n=1}^N a_n \hat{h}_{k\ell}[t_n] \right| + \delta \sum_{n=1}^N |a_n| \\
\left|\sum_{n=1}^N a_n h_{k\ell}[t_n] \right| &\geq \max\left( 0,~\left| \sum_{n=1}^N a_n \hat{h}_{k\ell}[t_n] \right| - \delta \sum_{n=1}^N |a_n|\right) \nonumber
\end{align} where $\delta$ is the maximum distance between any two points in a quantization cell.
\end{lemma}
\begin{IEEEproof}
Define $e_{k\ell}[t_n] \triangleq h_{k\ell}[t_n] - \hat{h}_{k\ell}[t_n]$. Since the coefficient magnitudes are less than $h_{\text{MAX}}$, then $| e_{k\ell}[t_n] | < \delta$. By the triangle inequality,
\begin{align}
\left| \sum_{n=1}^N{a_n h_{k\ell}[t_n]}\right| &=  \left| \sum_{n=1}^N{a_n \big(\hat{h}_{k\ell}[t_n] + e_{k\ell}[t_n]\big)} \right|\\
&\leq  \left| \sum_{n=1}^N{a_n \hat{h}_{k\ell}[t_n]} \right| +  \delta \sum_{n=1}^N{|a_n|}\ .
\end{align} The second inequality follows similarly via the reverse triangle inequality. \end{IEEEproof}

Channel matrices are quantized simply by quantizing their individual coefficients
\begin{align}
\mathbf{\hat{H}}[t] \triangleq \big\{ \hat{h}_{k\ell}[t] \big\} \ .
\end{align} Let $\mathcal{H}$ denote the finite set onto which channel matrices are quantized.

To facilitate our analysis, we will split the $T$ time slots into $N$ consecutive blocks of $T/N$ time slots each. Let 
\begin{align}
\mathbf{H}^{(n)} \triangleq \Bigg( \mathbf{H}\bigg[1+ \frac{(n-1)T}{N} \bigg],~\ldots~,\mathbf{H}\bigg[\frac{nT}{N} \bigg]\Bigg)
\end{align} for $n \in \{1,2,\ldots,N\}$ and let $\mathbf{\hat{H}}^{(n)}$ denote the corresponding quantized sequence.

We now recall the notion of strong typicality for sequences of discrete random variables and specialize it to sequences of quantized channel matrices. We define \begin{align}
p_{\mathbf{\hat{H}}}\big(\mathsf{\hat{H}}\big) \triangleq \P\big(\mathbf{\hat{H}}[t] = \mathsf{\hat{H}}\big)
\end{align} to be the probability under the fading distribution that a channel matrix quantizes to $\mathsf{\hat{H}} \in \mathcal{H}$. Also, define
\begin{align}
\#\Big(\mathsf{\hat{H}}|\mathbf{\hat{H}}^{(n)}\Big) \triangleq \bigg|\bigg\{ t : \mathbf{\hat{H}}[t] = \mathsf{\hat{H}},~1+ \frac{(n-1)T}{N}  \leq t \leq \frac{nT}{N} \bigg\}\bigg| \nonumber
\end{align} to be the number of quantized channel matrices within the $n^{\text{th}}$ block that are equal to $\mathsf{\hat{H}} \in \mathcal{H}$.

\begin{definition}[Strong Typicality] \label{d:typical}A block of quantized channel matrices, $\mathbf{\hat{H}}^{(n)}$, is \textit{$\gamma$-typical} if
\begin{align}
\bigg| \frac{N}{T}\#\Big(\mathsf{\hat{H}}|\mathbf{\hat{H}}^{(n)}\Big)-p_{\mathbf{\hat{H}}}\big(\mathsf{\hat{H}}\big) \bigg| \leq \gamma ~~~ \forall \mathsf{\hat{H}} \in \mathcal{H} \ . \end{align}
\end{definition} 

\begin{lemma} \label{l:typical}
For any $\epsilon > 0$ and $T$ large enough, the probability that all blocks $\Hhb^{(1)},\ldots, \Hhb^{(N)}$ are $\gamma$-typical is lower bounded by $1-\epsilon$.
\end{lemma} 
\begin{IEEEproof}
From Lemma 2.12 in \cite{csiszarkorner}, the probability that a block $\Hhb^{(n)}$ is $\gamma$-typical is at least
\begin{align}
1 - \frac{|\mathcal{H}|N}{4T\gamma^2} \ .
\end{align} Since the blocks are independent, the probability that all blocks $\Hhb^{(1)},\ldots,\Hhb^{(N)}$ are $\gamma$-typical is lower bounded by
\begin{align}
\bigg(1 - \frac{|\mathcal{H}|N}{4T\gamma^2}\bigg)^N \ . \label{e:probtypical}
\end{align}
From our choice of quantization scheme, $| \mathcal{H}| = (\kappa \eta + 1)^{K^2}$. Thus, (\ref{e:probtypical}) goes to $1$ as $T$ goes to infinity which completes the proof. 
\end{IEEEproof}
We will only work with sequences of channel matrices that are $\gamma$-typical and declare errors on the rest. This ensures that nearly all time indices can be matched up appropriately.

\section{Ergodic Interference Alignment} \label{s:ergodic}

Each transmitter-receiver pair would clearly be better off if it had exclusive access to the channel and faced no interference from other users. Specifically, if $h_{k\ell} = 0~~\forall \ell \neq k$, each receiver sees a point-to-point channel from its transmitter and can achieve
\begin{align}
R_k = \E\big[\log{\left(1 +  |h_{kk}|^2 P \right)}\big]\ .
\end{align}We call this the \textit{interference-free rate} and will use it as a benchmark to gauge our performance.

\begin{remark} Note that this assumes a uniform power allocation across all time slots and one can do better by using the causal channel state information to optimize the power allocation \cite{cs99}. For simplicity, we use a uniform power allocation throughout our derivations. See \cite{tuninetti08} for a study of power allocation for fast fading $2$-user interference channels. The interplay of interference alignment and waterfilling is an interesting subject for future study. 
\end{remark}

A simple approach to interference management is to have transmitters take turns using the channel, often referred to as time-division. For instance, if we partition the channel equally between transmitters, each one can achieve
\begin{align}
R_k =\frac{1}{K}\E\big[\log{(1 +  K|h_{kk}|^2 {P} )}\big] \ . \label{e:tdma}
\end{align} The extra $K$ factor inside the logarithm comes from saving up power while the transmitter is required to stay silent. Under this approach, the sum rate stays nearly constant as we add users to the network. Although this seems like a fundamental performance barrier, we can in fact do much better using interference alignment. 

The main idea underlying alignment is to carefully design the transmission scheme so that the effective interference at each receiver appears as if it came from a single transmitter. For the channel model under consideration, Cadambe and Jafar showed that this is possible using a vector space strategy \cite{cj08}. In brief, their strategy groups together several channel uses to get a (virtual) multiple-input multiple-output (MIMO) interference channel. Each transmitter is assigned a linear transformation based on the fading realization with rank roughly equal to half the number of channel uses. At each receiver, the interfering signals occupy one half of the dimensions while the desired signal occupies the other half and can be extracted using zero-forcing. This strategy allows the sum rate to increase linearly with the number of users at high SNR. Recall that $f(x) = o(g(x))$ means that $\lim_{x \rightarrow \infty}{f(x)/g(x)} = 0$.
\begin{theorem}[Cadambe-Jafar]
For the time-varying Gaussian interference channel, each transmitter can achieve a rate satisfying
\begin{align}
R_k =\frac{1}{2} \log{\left(1 + P\right)} + o(\log{\left(1 + P\right)}) \ .
\end{align} 
\end{theorem} For a full proof, see \cite[Theorem 1]{cj08}. This result characterizes the ``pre-log'' term of the achievable rates (also referred to as the degrees-of-freedom)\footnote{Note that this high SNR result does not depend on the uniform phase assumption.}. It implies that, at sufficiently high SNR, each user can achieve one half its interference-free rate regardless of the number of users in the network. We now set out to prove that each user can achieve at least half its interference-free rate at any SNR.

\begin{theorem}\label{t:icalign}
For the time-varying Gaussian interference channel defined in Section \ref{s:probstateic}, the rates \begin{align}
R_{k} = \frac{1}{2}\E\big[\log{(1 + 2 |h_{kk}|^2 {P})}\big]
\end{align} are achievable for $k = 1,2,\ldots,K$. \end{theorem}
\begin{IEEEproof}
Choose $\epsilon > 0$. We will divide up the $T$ channel uses into two consecutive intervals of equal length. We quantize all channel realizations using the scheme described in Section \ref{s:chanquant}. Applying Lemma \ref{l:typical} with $N = 2$, it follows that both blocks of $T/2$ channel uses are $\gamma$-typical with probability at least $(1 - \frac{\epsilon}{2})$ (with $\gamma$ to be specified later). By Definition \ref{d:typical}, this means that the number of occurrences of each possible quantized channel matrix in each interval is bounded as follows:
\begin{align}
\frac{T}{2}\Big(p_{\Hhb}\big(\Hhs\big) - \gamma \Big) \leq \#\big(\Hhs | \Hhb^{(n)}\big) \leq \frac{T}{2}\Big(p_{\Hhb}\big(\Hhs\big) + \gamma \Big) 
\end{align} for all $\Hhs \in \mathcal{H}$. 

If either interval is not $\gamma$-typical, we declare an error. Otherwise, we know that each quantized matrix will occur at least $\frac{T}{2}\left(p_{\Hhb}\big(\Hhs\big) - \gamma \right)$ times in each interval. A time slot $t$ in an interval is useable unless:
\begin{enumerate}
\item The channel matrix $\mathbf{H}[t]$ contains one or more elements with magnitude larger than $h_{\text{MAX}}$. 
\item The channel matrix $\mathbf{H}[t]$ does not violate the threshold but the corresponding quantized matrix $\Hhb[t]$ has already occurred at least  $\frac{T}{2}\big(p_{\Hhb}\big(\Hhs\big) - \gamma \big)$ times. 
\end{enumerate}
Assuming all intervals are $\gamma$-typical, the number of useable time slots per interval is \begin{align}
 &\left\lfloor \frac{T}{2}\sum_{\Hhs: \hat{h}_{k\ell} \neq \Gamma }\Big(p_{\Hhb}\big(\Hhs\big) - \gamma \Big) \right\rfloor= \left\lfloor \frac{T}{2}\left(1 - \rho - (\kappa \eta)^{K^2}\gamma\right)\right\rfloor.\nonumber
\end{align} Recall that $\rho$ is the probability the channel matrix contains an element larger than $h_{\text{MAX}}$ (which corresponds to the quantized matrix containing an erasure symbol $\Gamma$) and $\kappa$ and $\eta$ are parameters in the channel quantization. 

Each encoder uses an independent codebook $\mathcal{C}_k$ with rate $\tilde{R}_k$ and length equal to the number of useable time slots per interval. Each codebook is generated elementwise i.i.d. from a circularly symmetric Gaussian distribution with variance slightly less than $P$ (to ensure that, for large blocklengths, the power constraint is satisfied).

During the first interval, each transmitter sends out a new symbol from its codeword during each useable time slot $t_1$ and records the corresponding quantized channel matrices $\Hhb[t_1]$. We match up each useable time slot $t_1$ from the first interval with a useable time slot $t_2$ from the second interval for which the quantized channel matrix $\Hhb[t_2]$ is \textit{complementary},
\begin{align}
\Hhb[t_2]=\left[
\begin{array}{cccc}
\hat{h}_{11}[t_1] & -\hat{h}_{12}[t_1]  & \cdots& -\hat{h}_{1K}[t_1]  \\
 -\hat{h}_{21}[t_1] & \hat{h}_{22}[t_1]  & \cdots& -\hat{h}_{2K}[t_1]  \\
 \vdots & \vdots & \ddots & \vdots \\
 -\hat{h}_{K1}[t_1] & -\hat{h}_{K2}[t_1] & \cdots & \hat{h}_{KK}[t_1]
\end{array}
\right] \ .\nonumber
\end{align} Note that this can be done using only causal channel knowledge by greedily matching up time slots from the first interval in the order in which they occur. To ensure that $-\hat{h}_{k\ell}$ corresponds to a valid quantization cell, we constrain the number of angles (given by $\eta$) to be even. Since each channel coefficient has uniform phase, all of the useable time slots from the first interval can be matched with useable time slots from the second interval (assuming that the intervals are $\gamma$-typical).

Each receiver adds up its observations from the first interval to the matched observations in the second time slot $Y_k[t_1] + Y_k[t_2]$. We now calculate the resulting signal-to-interference-and-noise ratio ($\mathsf{SINR}$) at each receiver.  The channel coefficient corresponding to the desired signal belongs to the same quantization cell in matched time slots, $\hat{h}_{kk}[t_1] = \hat{h}_{kk}[t_2]$. From Lemma \ref{l:quantbounds}, we have that
\begin{align}
\big|h_{kk}[t_1] + h_{kk}[t_2]\big| &\geq \max\Big(2 \big|\hat{h}_{kk}[t_1]\big| - 2 \delta,0\Big) \\
&\geq  \max\Big(2 \big|{h}_{kk}[t_1]\big| - 4 \delta,0\Big)
\end{align} where $\delta$ is the maximum distance between any two points in a quantization cell. It follows that the signal power in $Y_k[t_1] + Y_k[t_2]$ for the symbol $X_k[t_1]$ is at least
\begin{align}
(2 \big|h_{kk}[t_1]\big| - 4 \delta)^2 P
\end{align} if $\big|h_{kk}[t_1]\big| > 2\delta$. For interfering signals, the channel coefficients from matched time slots satisfy $\hat{h}_{k\ell}[t_1] = -\hat{h}_{k\ell}[t_2]$. Applying Lemma \ref{l:quantbounds}, we get that 
\begin{align}
\big|h_{kk}[t_1] + h_{kk}[t_2]\big| \leq 2\delta \ .
\end{align} It follows that the total interference power in $Y_k[t_1] + Y_k[t_2]$ is at most
\begin{align}
4 \delta^2 (K-1) P \ .
\end{align} The noise power in $Y_k[t_1] + Y_k[t_2]$ is $2$. Combining these bounds, we get that if $h_{kk}[t_1] > \delta$, the $\mathsf{SINR}$ at receiver $k$ is at least
\begin{align}
\mathsf{SINR}_k \geq  \frac{P \left(2|h_{kk}[t_1]| -  4\delta  \right)^2}{4\delta^2(K-1)P + 2} \label{e:sinrlower} \ .
\end{align} Taking $\delta \rightarrow 0$, we see that
\begin{align}
\lim_{\delta \rightarrow 0} \mathsf{SINR}_k \geq 2|h_{kk}[t_1]|^2 P \ .
\end{align}

By choosing $h_{\text{MAX}}$ large enough, we can make the probability $\tau$ that the channel matrix violates the threshold as small as we desire. Next, we can choose $\kappa$ (the number of quantization rings) and $\eta$ (the number of angles) large enough, to make $\delta$ as small as desired and get the $\mathsf{SINR}$ at each receiver to be as close to $2|h_{kk}[t_1]|^2 P$ as we would like. Then, we can choose $\gamma$ to be sufficiently small so that the fraction of useable time slots is large. Finally, by taking $T$ large enough, we can find a good code with probability of error at most $\frac{\epsilon}{2}$ and rate at least \begin{align}
\frac{1}{2}\E\left[\log{\left(1 + 2 |h_{kk}|^2P\right)}\right] - \frac{\epsilon}{2} .
\end{align} Recall also that with probability $\frac{\epsilon}{2}$ the channel is not $\gamma$-typical so the total probability of error is less than $\epsilon$. Thus, there must exist a set of good fixed codebooks with the same performance. Finally, we expurgate all codewords that violate the power constraint which results in a rate loss of at most $\epsilon/2$ for $T$ large enough.
\end{IEEEproof}

In Figure \ref{f:icplot}, we have plotted the performance of the ergodic alignment scheme from Theorem \ref{t:icalign} for a time-varying $10$-user interference channel with i.i.d. Rayleigh fading, $h_{k\ell} \sim \mathcal{CN}(0,1)$. For comparison, we have also plotted the upper bound from (\ref{e:icupper}) in Appendix \ref{s:outer} and the performance of time division from (\ref{e:tdma}). For all three curves, we have taken a uniform power allocation across time. Note that the rates for ergodic alignment and the upper bound only depend on the fading statistics, not the number of users.

\begin{figure}[h]
\centering
\includegraphics[width=3.8in]{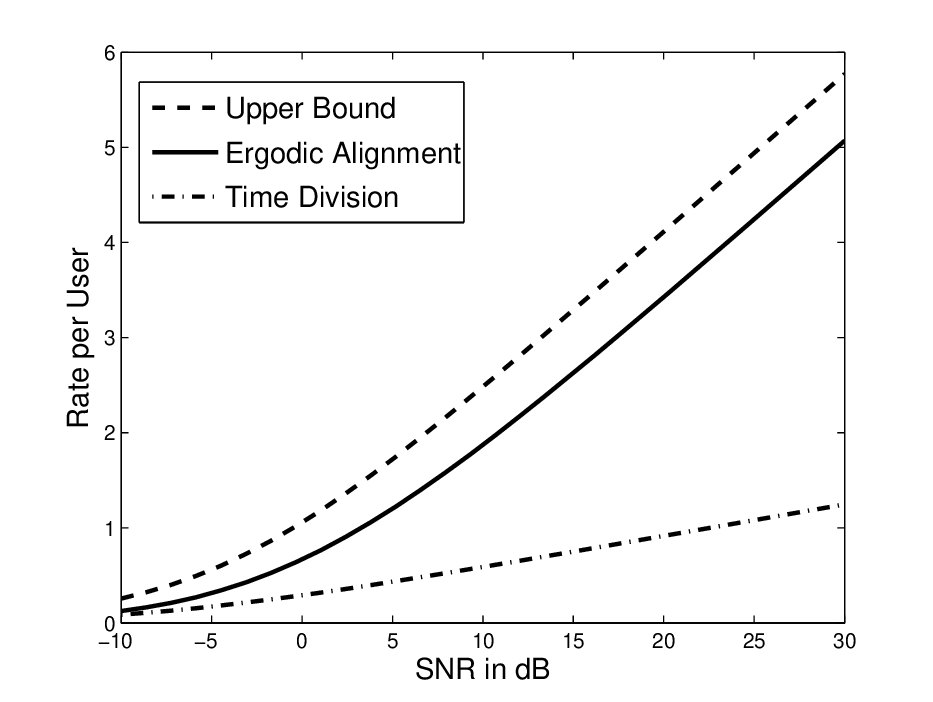}
\caption{Rate per user for the ergodic alignment scheme over a time-varying interference channel with i.i.d. Rayleigh fading. For comparison, we have also plotted the performance of time division for $10$ users.}\label{f:icplot}
\end{figure}

\begin{remark}
Assume all the channel coefficients have equal magnitudes and random, uniform phases, $h_{k\ell} = \exp(j 2\pi \phi_{k\ell})$ for $\phi_{k\ell}\sim \text{Unif}[0,2\pi)$. A quick comparison of the upper bound in (\ref{e:icupper}) and Theorem \ref{t:icalign} reveals that ergodic alignment achieves the sum capacity. Jafar \cite{jafar11IT} has shown that this holds more generally through the concept of a ``bottleneck state.'' That is, if each receiver sees an interferer of equal strength to their desired signal, ergodic alignment is optimal.
\end{remark}

In general, ergodic alignment alone does not yield the capacity region. For instance, if the cross-channel gains are very small relative to the direct gains, then it is better to treat the interference as noise, rather than spending two channel uses to cancel it out \cite{mk09,skc09,av09IT}. Thus, for Rayleigh fading, we can achieve higher rates by using this weak interference strategy over certain channel matrices and the alignment strategy over the rest. Conversely, if the cross-channel gains are very large relative to the direct gains, then it is better to decode the interference prior to decoding the desired message \cite{carleial75,sato81,hk81,ssep11}. It remains unclear as to whether an appropriate mixture of these three schemes can be used to approach the ergodic capacity region. That said, as the number of users increases, it becomes more likely that the network will be in a ``bottleneck state,'' implying that ergodic alignment is optimal \cite{jafar11IT}.

Suppose that user $\ell$ wants to communicate at more than half its interference-free rate. We now propose a simple time-sharing strategy for this scenario that blends our alignment scheme with a time-division scheme.  

\begin{corollary}
For the time-varying Gaussian interference channel defined in Section \ref{s:probstateic}, the following rates are achievable \begin{align}
R_\ell &= \alpha \E\big[\log{(1 + |h_{\ell\ell}|^2P)}\big]\nonumber \\&\qquad+ \frac{(1-\alpha)}{2} \E\left[\log{\left(1 + 2|h_{\ell\ell}|^2P\right)}\right] \\
R_k &= \frac{(1-\alpha)}{2} \E\bigg[\log{\bigg(1 + \frac{2|h_{kk}|^2P}{1-\alpha}\bigg)}\bigg]~~~~k \neq \ell
\end{align} for any $0 \leq \alpha \leq 1$ if each channel coefficient is drawn from a distribution with uniform phase.
\end{corollary}
\begin{IEEEproof}
For $\alpha T$ channel uses, all users except $\ell$ are silent. User $\ell$ employs a standard point-to-point channel code to achieve rate $\E[\log{(1 + |h_{\ell\ell}|^2P)}]$ over these channel uses. For the remaining $(1-\alpha)T$ channel uses, we employ ergodic interference alignment as in the proof of Theorem \ref{t:icalign}. User $\ell$ achieves $\frac{1}{2}\E\left[\log{\left(1 + 2|h_{\ell\ell}|^2P\right)}\right]$ over these channel uses as before. Since each user $k \neq \ell$ was silent for the prior $\alpha T$ channel uses, it has saved up power and can afford to transmit each symbol with average power $\frac{P}{1-\alpha}$ resulting in a rate of $\frac{1}{2}\E\left[\log{\left(1 + \frac{2|h_{kk}|^2P}{1-\alpha}\right)}\right]$ over the remaining channel uses. 
\end{IEEEproof}

\subsection{Delay Analysis} \label{s:delay}

We now provide a brief analysis of the delay requirements of our scheme. First, we note that we designed our matrix pairing strategy to simplify the achievability proof; there may be other choices that will result in lower delay. In general, the delay incurred by an alignment scheme will depend on the number on users $K$, the SNR, and the achievable rates. 

For our analysis, we consider the special case of fixed-magnitude channel gains. Let $h_{k\ell}[t] = r_{k\ell} e^{j \phi_{k\ell}[t]}$ where the magnitudes $r_{k\ell}$ are real, positive constants and the phases $\phi_{k\ell}[t]$ are i.i.d. according to a uniform distribution over $[0,2\pi)$. Since the magnitude is held constant, our quantization scheme from Section \ref{s:chanquant} consists of mapping the phases $\phi_{k\ell}[t]$ to the closest of the $\eta$ quantized angles. The maximum distance $\delta_{k\ell}$ between two channel gains $h_{k\ell}[t_1]$ and $h_{k\ell}[t_2]$ that are quantized to the same angle is
\begin{align}
\delta_{k\ell} &= 2 r_{k\ell} \sin\left(\frac{\pi}{\eta}\right) \\
&\leq \frac{2\pi r_{k\ell}}{\eta} \ .
\end{align} Plugging this into \eqref{e:sinrlower}, we find that the SINR per codeword symbol is lower bounded by 
\begin{align}
\mathsf{SINR}_k \geq  \frac{r_{kk}^2 P \left( 2 - \frac{8\pi}{\eta}  \right)^2}{\frac{16\pi^2}{\eta^2}(K-1)P \max_{\ell \neq k} r_{k\ell}^2  + 2}  \ .
\end{align} To maintain a capacity scaling of roughly $\frac{1}{2} \log P$ per user, we require that, for some constant $\phi > 0$, 
\begin{align}
\eta^2 = (K-1) P \phi \ .
\end{align} Consider the expected delay before a single codeword symbol from each transmitter is successfully obtained by the receivers. If the transmitters were to send a new symbol every time slot and the receivers were to simply treat interference as noise, this expected delay is 1. For ergodic alignment, codeword symbols must travel through a channel matrix and the complementary channel matrix with opposite quantized phases. Since the channel gains are independent, the probability of the complementary matrix occurring in a given time slot is $(1/\eta)^{K^2}$. Thus, the number of time slots until the complementary matrix occurs is a geometric random variable with parameter $(1/\eta)^{K^2}$ and the expected delay is 
\begin{align}
\eta^{K^2} = \left( (K-1) P \phi \right)^{K^2/2} \ . 
\end{align} For time-varying magnitudes, the expected delay scales in a similar fashion with an additional penalty for waiting for the magnitudes to match. 

This delay scaling roughly corresponds to the $2^{K^2}$ independent channel realizations required by the Cadambe-Jafar beamforming scheme to attain $K/2$ degrees-of-freedom over a time-varying interference channel \cite{cj08}. The rate-delay tradeoff for linear beamforming schemes can be interpreted as the degrees-of-freedom that is attainable for a given number of independent channel realizations. This tradeoff has been characterized for $3$-user interference channels by Bresler and Tse \cite{bt09}.

In the context of ergodic alignment, the first-order question is whether the exponent of the expected delay can be significantly improved without sacrificing rate. More generally, the challenge is to design channel matching schemes that operate on the optimal tradeoff between delay and rate. See \cite{kwg10,jap10} for recent work related to these questions. 

\subsection{Practical Considerations}

As noted above, the proposed ergodic alignment scheme requires very long delays to attain half the interference-free rate. This requirement, coupled with the need for full CSIT, seems to limit the scheme to scenarios where high rates are far more valuable than low delays. However, the core idea underlying ergodic alignment, matching up complementary channel matrices, can be interpreted more broadly. For instance, one can match up complementary channels across frequencies rather than time slots. As shown by Jafar \cite{jafar12}, interference can also be completely eliminated using adjacent time slots if the direct channel gains change while the interfering channel gains remain the same. Interestingly, for this blind alignment scheme, the receivers do not need to know the channel gains, only the coherence intervals. In certain cases, such as the X channel, one can induce the desired coherence intervals simply through antenna switching \cite{wgj11}. Going beyond the wireless setting, ergodic alignment has recently been investigated as a simple network coding strategy for multiple unicast traffic \cite{rdmmjv10}. Note that in wired network coding, the ``channel'' coefficients can be freely chosen, i.e., there is no need to wait for nature to provide complementary channel gains.

\section{Recovering More Messages} \label{s:multicast}

In this section, we generalize our alignment scheme to handle the case where each receiver attempts to decode more than one message. The problem setup is largely the same as in Section \ref{s:probstateic} except that now there are $L$ transmitters, each with a single message $m_\ell$ of rate $\tilde{R}_\ell$, and $K$ receivers that want exactly $M$ messages each. For simplicity, we will assume that all messages are requested by the same number of receivers. (Note that this implicitly assumes that $\frac{KM}{L}$ is an integer.) Let $\mathcal{S}_k$ denote the set of indices of messages desired at receiver $k$ and let $\mathcal{S}_k(i)$ denote the $i^{\text{th}}$ index in the set. We now replace Definitions \ref{d:decoders} and \ref{d:rates} with the following two definitions.
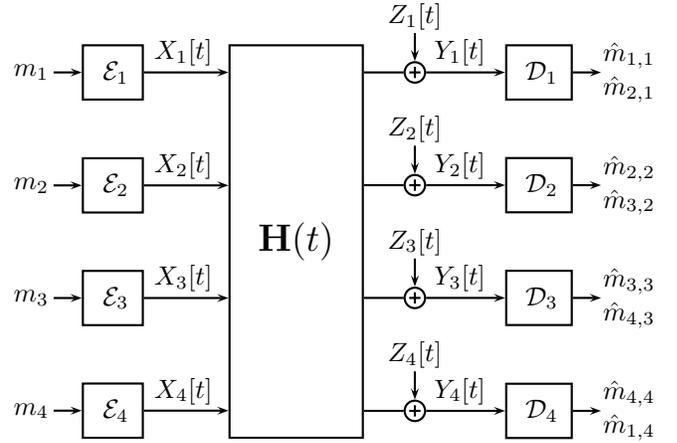
\begin{figure}[h]
\begin{center}
\psset{unit=0.75mm}
\begin{pspicture}(0,-33)(109,44)

\rput(1,32){$m_1$} \psline{->}(5,32)(10,32) \psframe(10,27)(21,37)
\rput(16,32){$\mathcal{E}_1$} \rput(28,35.5){$X_1[t]$}
\psline{->}(21,32)(36,32)

\rput(0,-20){
\rput(1,32){$m_2$} \psline{->}(5,32)(10,32) \psframe(10,27)(21,37)
\rput(16,32){$\mathcal{E}_2$} \rput(28,35.5){$X_2[t]$}
\psline{->}(21,32)(36,32)
}

\rput(0,-40){
\rput(1,32){$m_3$} \psline{->}(5,32)(10,32) \psframe(10,27)(21,37)
\rput(16,32){$\mathcal{E}_3$} \rput(28,35.5){$X_3[t]$}
\psline{->}(21,32)(36,32)
}

\rput(0,-60){
\rput(1,32){$m_4$} \psline{->}(5,32)(10,32) \psframe(10,27)(21,37)
\rput(16,32){$\mathcal{E}_4$} \rput(28,35.5){$X_4[t]$}
\psline{->}(21,32)(36,32)
}


\psframe(36,-33)(60,37)
\rput(48,2){\Large{$\mathbf{H}(t)$}}

\rput(20,0){
\psline(40,32)(47,32)
\pscircle(49,32){2} \psline{-}(48,32)(50,32)
\psline{-}(49,31)(49,33) \psline{->}(49,39)(49,34) \rput(49,42){$Z_1[t]$}
\psline{->}(51,32)(65,32) \rput(57,35.5){$Y_1[t]$}

\psline(40,12)(47,12)
\pscircle(49,12){2} \psline{-}(48,12)(50,12)
\psline{-}(49,11)(49,13) \psline{->}(49,19)(49,14) \rput(49,22){$Z_2[t]$}
\psline{->}(51,12)(65,12) \rput(57,15.5){$Y_2[t]$}

\psline(40,-8)(47,-8)
\pscircle(49,-8){2} \psline{-}(48,-8)(50,-8)
\psline{-}(49,-9)(49,-7) \psline{->}(49,-1)(49,-6) \rput(49,2){$Z_3[t]$}
\psline{->}(51,-8)(65,-8) \rput(57,-4.5){$Y_3[t]$}

\rput(0,-20){
\psline(40,-8)(47,-8)
\pscircle(49,-8){2} \psline{-}(48,-8)(50,-8)
\psline{-}(49,-9)(49,-7) \psline{->}(49,-1)(49,-6) \rput(49,2){$Z_4[t]$}
\psline{->}(51,-8)(65,-8) \rput(57,-4.5){$Y_4[t]$}
}

\psframe(65,27)(77,37) \rput(71.5,32){$\mathcal{D}_1$}
\psline{->}(77,32)(82,32)  \rput(87,35){$\hat{m}_{1,1}$} \rput(87,29){$\hat{m}_{2,1}$}

\psframe(65,7)(77,17) \rput(71.5,12){$\mathcal{D}_2$}
\psline{->}(77,12)(82,12)  \rput(87,15){$\hat{m}_{2,2}$} \rput(87,9){$\hat{m}_{3,2}$}

\psframe(65,-13)(77,-3) \rput(71.5,-8){$\mathcal{D}_3$}
\psline{->}(77,-8)(82,-8) \rput(87,-5){$\hat{m}_{3,3}$} \rput(87,-11){$\hat{m}_{4,3}$}

\rput(0,-20){
\psframe(65,-13)(77,-3) \rput(71.5,-8){$\mathcal{D}_4$}
\psline{->}(77,-8)(82,-8)\rput(87,-5){$\hat{m}_{4,4}$} \rput(87,-11){$\hat{m}_{1,4}$}
}
}

\end{pspicture}
\end{center}
\caption{Interference channel where each receiver wants $M = 2$ messages. } \label{f:multicastic}
\end{figure}

\begin{definition}[Decoders] Each receiver has a \textit{decoding function},
\begin{equation}\mathcal{D}_k: \mathbb{C}^T \rightarrow \prod_{i=1}^M \{1,2,\ldots,2^{n\tilde{R}_{\mathcal{S}_k(i)}}\} \ ,
\end{equation} that maps its length $T$ observed channel output $\{Y_k[t]\}_{t=1}^T$ into estimates $\widehat{m}_{\ell,k}$ of its desired messages $m_\ell$ for all $\ell$ such that $\ell \in \mathcal{S}_k$. 
\end{definition}

\begin{definition}[Achievable Rates] We say that a rate tuple $(R_1,R_2,\ldots,R_L)$ is \textit{achievable} if for all $\epsilon > 0$ and $T$ large enough there exist channel encoding and decoding functions $\mathcal{E}_1, \ldots, \mathcal{E}_L, \mathcal{D}_1, \ldots, \mathcal{D}_K$ such that
\begin{align}
&\tilde{R}_\ell > R_\ell - \epsilon,~~~\ell=1,2,\ldots,L \ , \\
&\P\left(\bigcup_k \bigcup_{\ell \in \mathcal{S}_k}\{\widehat{m}_{\ell,k} \neq m_\ell\}  \right) < \epsilon \ .
\end{align}
\end{definition}

 In Figure \ref{f:multicastic}, we provide a block diagram of a case with $L=4$ transmitters, $K = 4$ receivers, and message requests $\mathcal{S}_1 = \{1,2\}, \mathcal{S}_2 = \{2,3\},\mathcal{S}_3 = \{3,4\}$, and $ \mathcal{S}_4 = \{4,1\}$.

As before, all encoders retransmit their symbols at well-chosen time indices. This has the effect of giving the decoders equations with the symbols as the variables and the coefficients given by the channel. Here, it is insufficient to look for pairs of channel coefficients that exactly cancel. Since each receiver wants $M$ messages, we will need $M+1$ time slots (or dimensions). Of these, $M$ will be used for the desired messages and the remaining dimension will be used for the interfering terms. Define
\begin{equation}
\omega \triangleq \exp\bigg(\frac{j2\pi}{M+1}\bigg)
\end{equation} to be the $(M+1)^{\text{th}}$ root of unity and let $\mathbf{W}$ be the size $M+1$ discrete Fourier transform (DFT) matrix:
\begin{align}
\mathbf{W} = \left[
 \begin{array}{ccccc}
\omega^0 & \omega^0 & \omega^0  & \cdots & \omega^0  \\
\omega^0 & \omega^1 & \omega^2  & \cdots & \omega^{M} \\
\omega^0 & \omega^2  & \omega^4  & \cdots & \omega^{2M} \\
 \vdots & \vdots & \vdots & \ddots & \vdots \\
\omega^0 & \omega^{M} & \omega^{2M}  & \cdots & \omega^{M^2}   
\end{array}
\right] .
\end{align}
The inverse DFT matrix has the following form: 
\begin{align*}
\mathbf{W}^{-1} = \frac{1}{M+1}\left[
 \begin{array}{ccccc}
\omega^0 & \omega^0 & \omega^0 &  \cdots & \omega^0  \\
\omega^0 & \omega^{-1} & \omega^{-2}  &    \cdots & \omega^{-M} \\
\omega^0 & \omega^{-2}  & \omega^{-4}  &  \cdots & \omega^{-2M} \\
 \vdots & \vdots & \vdots & \ddots & \vdots \\
\omega^0 & \omega^{-M} &  \omega^{-2M} &\cdots & \omega^{-M^2}   
\end{array}
\right] 
\end{align*}

First, consider the following idealized scenario. As in the introduction, assume that each transmitter sends signals $X_1, X_2,\ldots, X_L$ at time $t_1$ under channel matrix $\Hhb = \{h_{k\ell}\}$. The transmitters then wait for channel coefficients satisfying
\begin{align}
h_{k\ell}[t_n] = \begin{cases} 
\omega^{(i-1)(n-1)} h_{k\ell}[t_1]& \ell = \mathcal{S}_k(i), \\
\omega^{M(n-1)}h_{k\ell}[t_1] & \ell \notin \mathcal{S}_k . 
\end{cases}
\end{align} for $n = 2,\ldots, M+1$ and resend $X_1, X_2,\ldots, X_L$ during these time slots. Assume that receiver $k$ wants the first $M$ messages\footnote{For any other choice of $\mathcal{S}_{k}$ simply replace the transmitter indices $1,\ldots, M$ with $\mathcal{S}_k(1), \ldots, \mathcal{S}_k(M)$.}. Then, the channel observations at receiver $k$ can be written in vector form as
\begin{align}
\left[\begin{array}{c}
Y_k[t_1] \\
Y_k[t_2] \\
\vdots \\
Y_k[t_{M+1}]
\end{array}\right] = \mathbf{W} \left[\begin{array}{c}
h_{k1}[t_1] X_{1} \\
\vdots \\
h_{k M}[t_1] X_{M} \\
{\displaystyle \sum_{\ell \notin \mathcal{S}_k} h_{k\ell}[t_1]X_\ell}
\end{array}\right] + \left[\begin{array}{c}
Z_k[t_1] \\
Z_k[t_2] \\
\vdots \\
Z_k[t_{M+1}]
\end{array}\right] \nonumber \ . 
\end{align} That is, each desired signal is assigned to a unique DFT vector. All of the undesired signals are assigned to a single DFT vector that is orthogonal from the others. As a result, the receiver can apply the inverse DFT matrix to its vector of observations to extract its desired signals,
\begin{align}
 \mathbf{W}^{-1}\left[\begin{array}{c}
Y_k[t_1] \\
Y_k[t_2] \\
\vdots \\
Y_k[t_{M+1}]
\end{array}\right] =  \left[\begin{array}{c}
h_{k1}[t_1] X_{1} \\
\vdots \\
h_{k M}[t_1] X_{M} \\
{\displaystyle \sum_{\ell \notin \mathcal{S}_k} h_{k\ell}[t_1]X_\ell}
\end{array}\right] + \left[\begin{array}{c}
\tilde{Z}_k[t_1] \\
\tilde{Z}_k[t_2] \\
\vdots \\
\tilde{Z}_k[t_{M+1}]
\end{array}\right] \nonumber 
\end{align} where the $\tilde{Z}_k[t_n]$ are transformed noise terms,
\begin{align}
\tilde{Z}_k[t_n] =\frac{1}{M+1} \sum_{m=1}^{M+1} \omega^{-(i-1)(n-1)} Z_k[t_m] \ , \label{e:transformednoise}
\end{align} that are i.i.d. circularly symmetric Gaussian random variables with mean zero and variance $1/(M+1)$. Of course, we cannot afford to wait until the channel coefficients match precisely; instead, we will match up time slots based on quantized channel coefficients. The next theorem formalizes the scheme described above.

\begin{theorem}\label{t:multicast}
For the time-varying Gaussian interference channel (as defined in Section \ref{s:probstateic}) where receiver $k$ wants $M$ messages $\{m_\ell : \ell \in \mathcal{S}_k\}$, the rates
\begin{align}
R_{\ell} = \min_{k: \ell \in \mathcal{S}_k}\frac{1}{M+1}\E\big[\log{(1 + (M+1) |h_{k\ell}|^2 P)}\big] \ .
\end{align} are achievable for $\ell = 1,2,\ldots,L$. \end{theorem}

\begin{IEEEproof}
Choose $\epsilon > 0$. We will divide up the $T$ channel uses into $M+1$ consecutive intervals of equal length. We quantize all channel realizations using the scheme described in Section \ref{s:chanquant}. Applying Lemma \ref{l:typical} with $N = M+1$, it follows that all $M+1$ blocks of $T/(M+1)$ channel uses are $\gamma$-typical with probability at least $(1 - \frac{\epsilon}{2})$ (with $\gamma$ to be specified later). By Definition \ref{d:typical}, this means that the number of occurrences of each possible quantized channel matrix in each interval is bounded as follows:
\begin{align}
\frac{T}{M+1}\Big(p_{\Hhb}\big(\Hhs\big) - \gamma \Big) \leq \#\big(\Hhs | \Hhb^{(n)}\big) \leq \frac{T}{M+1}\Big(p_{\Hhb}\big(\Hhs\big) + \gamma \Big)  \nonumber
\end{align} for all $\Hhs \in \mathcal{H}$. 

If any block is not $\gamma$-typical, we declare an error. Otherwise, we know that each quantized matrix will occur at least $\frac{T}{M+1}\left(p_{\Hhb}\big(\Hhs\big) - \gamma \right)$ times in each interval. A time slot $t$ in an interval is useable unless:
\begin{enumerate}
\item The channel matrix $\mathbf{H}[t]$ contains one or more elements with magnitude larger than $h_{\text{MAX}}$. 
\item The channel matrix $\mathbf{H}[t]$ does not violate the threshold but has already occurred at least  $\frac{T}{M+1}\left(p_{\Hhb}\big(\Hhs\big)  - \gamma \right)$ times. 
\end{enumerate}
Assuming the intervals are $\gamma$-typical, the number of useable time slots per interval is \begin{align}
 &\left\lfloor \frac{T}{M+1}\sum_{\Hhs: \hat{h}_{k\ell} \neq \Gamma }\Big(p_{\Hhb}\big(\Hhs\big) - \gamma \Big) \right\rfloor \nonumber \\ &= \left\lfloor \frac{T}{M+1}\left(1 - \rho - \gamma(\kappa \eta)^{KL}\right)\right\rfloor.
\end{align}

Each encoder employs an independent codebook $\mathcal{C}_\ell$ with rate $\tilde{R}_\ell$ and length chosen to match the number of useable time slots per block. The codewords are generated i.i.d. from a circularly symmetric Gaussian distribution with variance slightly less than $P$. 

During the first interval, each transmitter sends out a new symbol from its codeword during each useable time slot $t_1$ and records the corresponding quantized channel matrices $\Hhb[t_1] = \{\hat{h}_{k\ell}[t_1]\}$. We match up each useable time slot $t_1$ from the first interval with useable time slot $t_n$ from the $n^{\text{th}}$ interval for $n = 2,\ldots, M+1$ such that
\begin{align}
\hat{h}_{k\ell}[t_n] = \begin{cases} 
\omega^{(i-1)(n-1)} \hat{h}_{k\ell}[t_1]& \ell = \mathcal{S}_k(i), \\
\omega^{M(n-1)} \hat{h}_{k\ell}[t_1] & \ell \notin \mathcal{S}_k 
\end{cases}
\end{align} where $\mathcal{S}_k(i)$ is the $i^{\text{th}}$ message requested by receiver $k$. Note that this matching can be performed using only causal channel knowledge by greedily matching up time slots from intervals $n=2,\ldots,M+1$ with time slots from the first interval in the order in which they occur. To ensure that $\omega^{(i-1)(n-1)}\hat{h}_{k\ell}$ corresponds to a valid quantization cell, we constrain the number of angles (given by $\eta$) to be a multiple of $M+1$. Owing to the symmetry of the uniform phase assumption, all useable time slots will be successfully matched.

Note that each receiver essentially observes a DFT of its desired signals and the interference. Thus, by applying the inverse DFT, each receiver can see its desired signals through nearly interference-free channels. Specifically, the quantized channel coefficients satisfy
\begin{align}
\frac{1}{M+1} \sum_{n=1}^{M+1} \omega^{-(i-1)(n-1)} \hat{h}_{k\ell}[t_n] = \begin{cases} \hat{h}_{k\ell}[t_1]& \ell = \mathcal{S}_k(i), \\
0 & \ell \neq \mathcal{S}_k(i) \ . \nonumber
\end{cases}
\end{align} Therefore, applying the same transformation to channel observations from matched time slots $t_1, \ldots, t_{M+1}$,
\begin{align}
\tilde{Y}_{km}[t_1] = \frac{1}{M+1} \sum_{n=1}^{M+1} \omega^{-(i-1)(n-1)} Y_{k}[t_n] \ , 
\end{align} will yield nearly interference-free channels from the transmitters in $\mathcal{S}_k$ to receiver $k$.

Using Lemma \ref{l:quantbounds}, we have that
\begin{align}
&\bigg|\frac{1}{M+1} \sum_{n=1}^{M+1} \omega^{-(i-1)(n-1)} {h}_{k\mathcal{S}_k(i)}[t_n]  \bigg| \\
&\geq \max\Big(\big|\hat{h}_{k\mathcal{S}_k(i)}[t_1]\big| -  \delta,0\Big) \\
&\geq \max\Big(\big|{h}_{k\mathcal{S}_k(i)}[t_1]\big| -  2\delta,0\Big) 
\end{align} where $\delta$ is the maximum distance between any two points in a quantization cell. It follows that the signal power in $\tilde{Y}_{ki}[t_1] $ is at least
\begin{align}
\Big(\big|h_{k\mathcal{S}_k(i)}[t_1]\big| - 2 \delta\Big)^2 P
\end{align} if $\big|h_{k\mathcal{S}_k(i)}[t_1]\big| > 2\delta$. Applying Lemma \ref{l:quantbounds}, we get that the interference power from each transmitter $\ell \neq \mathcal{S}_k(i)$ is at most $\delta^2 P$. The noise power is exactly $1/(M+1)$ as shown in (\ref{e:transformednoise}). Thus, the resulting channel from each transmitter $\ell$ to receiver $k$ for $\ell = \mathcal{S}_k(i)$ has signal-to-interference-and-noise ratio no less than \begin{align}
\mathsf{SINR}_{k\ell} \geq  \frac{\Big(\big|h_{k\ell}[t_1]\big| - 2 \delta\Big)^2 P}{K \delta^2 P + (M+1)^{-1}   } \ . \nonumber
\end{align} if $\big|h_{k\ell}[t_1]\big| > 2\delta$. Choosing $\delta$ small enough (by making $\kappa$ and $\eta$, the quantization parameters, large enough), we make the signal-to-interference-and-noise ratios satisfy 
\begin{align}
\mathsf{SINR}_{k\ell} \geq (M+1) | h_{k\ell} [t_1] |^2 P - \lambda
\end{align} for some $\lambda > 0$ to be specified later.

By choosing $h_{\text{MAX}}$ large enough and $\gamma$ small enough, we can ensure there are at least $\frac{T(1-\lambda)}{M+1}$ useable time slots. For $T$ large enough, receiver $k$ can decode the message $m_\ell$ from transmitter $\ell \in \mathcal{S}_k$ with probability of error $\frac{\epsilon}{2KM}$ if
\begin{align*}
\tilde{R}_\ell &\leq \frac{1-\lambda}{M+1} \E\big[ \log{(1 + (M+1) | h_{k\ell} [t_1] |^2 P - \lambda)}\big] - \frac{\epsilon}{3} \ .
\end{align*} Choosing $\lambda$ small enough, we get 
\begin{align}
\tilde{R}_\ell &\leq \frac{1}{M+1} \E\big[ \log{(1 + (M+1) | h_{k\ell} [t_1] |^2 P)}\big] - \frac{2\epsilon}{3} \ .
\end{align} Thus, by the union bound, all receivers can decode their messages with probability of error $\frac{\epsilon}{2}$ if
\begin{align*}
\tilde{R}_\ell &\leq \frac{1}{M+1} \min_{k: \ell \in \mathcal{S}_k} \E\big[ \log{(1 + (M+1) | h_{k\ell} [t_1] |^2 P)}\big] - \frac{2\epsilon}{3} \ .
\end{align*} Recall also that with probability $\frac{\epsilon}{2}$ the channel is not $\gamma$-typical so the total probability of error is less than $\epsilon$. Therefore, there must exist a set of good fixed codebooks which we can expurgate to meet the power constraint with an additional rate loss of at most $\epsilon/3$.
\end{IEEEproof}

As before, we have not optimized the power allocation using the transmitters' knowledge of the channel realizations. 

From the upper bound (\ref{e:multicastupper}) in Appendix \ref{s:outer}, it follows that (for symmetric rates), it is impossible to attain a pre-log factor greater than $1/(M+1)$.

\begin{remark}
If we simply extended the scheme from Theorem \ref{t:icalign} and cancelled out the interference from each desired signal one-by-one, we would not attain the same power gain. Specifically, assume that at time $t_{n}$ we flip the channel coefficients from transmitter $\ell = \mathcal{S}_k(n-1)$ to receiver $k$,
\begin{align}
\hat{h}_{k\ell}[t_n] = \begin{cases} 
\hat{h}_{k\ell}[t_1]& \ell = \mathcal{S}_k(n-1), \\
- \hat{h}_{k\ell}[t_1] & \ell \neq \mathcal{S}_k(n-1) \ . 
\end{cases}
\end{align} The receivers can then simply add together times $t_1$ and $t_{n}$ to get a clean channel from transmitter $\ell = \mathcal{S}_k(n-1)$ to obtain their $(n-1)^{\text{th}}$ desired message. However, this will only yield a power gain of $2$ instead of the full gain of $M+1$,
\begin{align}
R_{\ell} = \min_{k: \ell \in \mathcal{S}_k}\frac{1}{M+1}\E\Big[\log{\big(1 + 2 |h_{k\ell}|^2 P \big)}\Big].
\end{align}
\end{remark} 

In Figure \ref{f:multicastplot}, we have plotted the performance of the scheme from Theorem \ref{t:multicast} over the network in Figure \ref{f:multicastic} with i.i.d. Rayleigh fading. The upper bound is from (\ref{e:multicastupper}) in Appendix \ref{s:outer} and the time division scheme is from (\ref{e:tdma}) for $4$ users. The ergodic alignment scheme has the same $1/3$ slope as the upper bound whereas the time division scheme has a slope of $1/4$. The gap between alignment and time division becomes more pronounced if we increase the ratio between transmitters $L$ and the number of desired messages $M$.

\begin{figure}[h]
\centering
\includegraphics[width=3.8in]{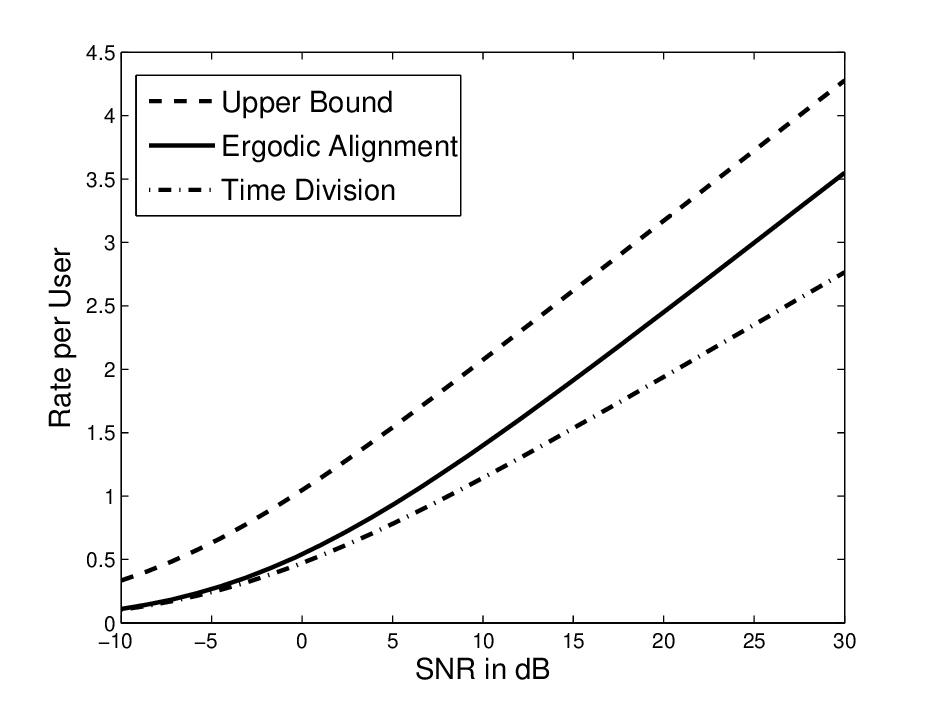}
\caption{Rate per user for the network in Figure \ref{f:multicastic} with i.i.d. Rayleigh fading.}\label{f:multicastplot}
\end{figure}

\begin{remark}
Very recent work by Ke \textit{et al.} has determined the degrees-of-freedom region for an interference channel where each receiver requests an arbitrary subset of the transmitted messages \cite{krwy12}. 
\end{remark}

\section{X Message Set} \label{s:xchannel}

We now turn to a variant of the interference channel, the X channel, that has garnered significant attention \cite{js08,mmk08,cj09ITb}. In this scenario, there are $L$ transmitters and $K$ receivers and each transmitter has an independent message for each receiver. For the single antenna case, Cadambe and Jafar showed that the sum degrees-of-freedom is $\frac{LK}{L+K-1}$ using interference alignment \cite{cj09ITb}. Here, we extend this result to the finite SNR regime for the special case of $K = 2$ receivers. Let $m_{\ell 1}$ and $m_{\ell 2}$ denote the messages sent from the $\ell^{\text{th}}$ transmitter to the first and second receiver, respectively. Each message has rate $\tilde{R}_{\ell k}$. Figure \ref{f:xchan2} is a block diagram of an X message set for $K = 2$ transmitters and $L = 2$ receivers.
\begin{figure}[h]
\begin{center}
\psset{unit=0.75mm}
\begin{pspicture}(-8,7)(115,42)

\rput(-2,32){$\begin{array}{c}m_{11}\\m_{12}\end{array}$} \psline{->}(3,32)(8,32) \psframe(8,27)(18,37)
\rput(13.5,32){$\mathcal{E}_1$} \rput(26,35.5){$X_1[t]$}
\psline{->}(18,32)(34,32)

\rput(0,-20){
\rput(-2,32){$\begin{array}{c}m_{21}\\m_{22}\end{array}$} \psline{->}(3,32)(8,32) \psframe(8,27)(18,37)
\rput(13.5,32){$\mathcal{E}_2$} \rput(26,35.5){$X_2[t]$}
\psline{->}(18,32)(34,32)
}

\psframe(34,7)(60,37)
\rput(47,22){\Large{$\mathbf{H}(t)$}}

\rput(18,0){
\psline(42,32)(47,32)
\pscircle(49,32){2} \psline{-}(48,32)(50,32)
\psline{-}(49,31)(49,33) \psline{->}(49,39)(49,34) \rput(49,42){$Z_1[t]$}
\psline{->}(51,32)(67,32) \rput(59,35.5){$Y_1[t]$}

\psline(42,12)(47,12)
\pscircle(49,12){2} \psline{-}(48,12)(50,12)
\psline{-}(49,11)(49,13) \psline{->}(49,19)(49,14) \rput(49,22){$Z_2[t]$}
\psline{->}(51,12)(67,12) \rput(59,15.5){$Y_2[t]$}
\psframe(67,27)(77,37) \rput(72.5,32){$\mathcal{D}_1$}
\psline{->}(77,32)(82,32)  \rput(87,32){$\begin{array}{c}\hat{m}_{11}\\\hat{m}_{21}\end{array}$}

\psframe(67,7)(77,17) \rput(72.5,12){$\mathcal{D}_2$}
\psline{->}(77,12)(82,12) \rput(87,12){$\begin{array}{c}\hat{m}_{12}\\\hat{m}_{22}\end{array}$}
}

\end{pspicture}
\end{center}
\caption{X message set for $K = 2$ transmitters and $L = 2$ receivers. } \label{f:xchan2}
\end{figure}
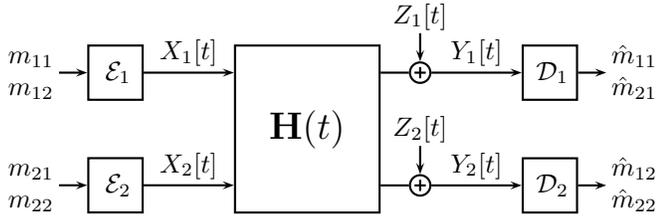

Unlike in our previous schemes, we cannot hope for the channel to generate an independent coefficient for every message. Transmitters should instead separate their messages by premultiplying them by phases. This leaves us with fewer variables to work with to align the interference at every receiver. For simplicity, assume that each transmitter splits its power equally between its messages $m_{\ell 1}$ and $m_{\ell 2}$. Each of these messages is mapped to a codeword whose symbols are represented by $X_{\ell 1}$ and $X_{\ell 2}$, respectively. 

We first introduce our scheme in an idealized setting where the transmitters wait for channel coefficients that precisely match. At time $t_1$, each transmitter sends $X_\ell = X_{\ell 1} + X_{\ell 2}$ and records the resulting channel realization $\mathbf{H}[t_1] = \{ h_{k\ell}[t_1] \}$. The transmitters then wait for time slots $t_2,\ldots, t_{L+1}$ satisfying
\begin{align}
h_{1\ell}[t_n] &= h_{1 \ell}[t_1] \\
h_{2\ell}[t_n] &= \omega^{-(\ell - 1)(n-1)}h_{2 \ell}[t_1] 
\end{align} where $\omega = \exp(j 2\pi/(L+1))$. During these time slots, the transmitters send
\begin{align}
X_{\ell}[t_n] &= \omega^{(\ell-1)(n-1)} X_{\ell 1} + \omega^{L(n-1)} X_{\ell 2} \ . 
\end{align} 

The resulting channel outputs at receiver $1$ can be written in vector form as
\begin{align}
\left[\begin{array}{c}
Y_1[t_1] \\
Y_1[t_2] \\
\vdots \\
Y_1[t_{L+1}]
\end{array}\right] = \mathbf{W} \left[\begin{array}{c}
h_{11}[t_1] X_{11} \\
\vdots \\
h_{1 L }[t_1] X_{L1} \\
{\displaystyle \sum_{\ell=1}^L h_{1 \ell}[t_1]X_{\ell 2}}
\end{array}\right] + \left[\begin{array}{c}
Z_1[t_1] \\
Z_1[t_2] \\
\vdots \\
Z_1[t_{L+1}]
\end{array}\right] \nonumber \ . 
\end{align} Similarly, the channel outputs at receiver $2$ are
\begin{align}
\left[\begin{array}{c}
Y_2[t_1] \\
Y_2[t_2] \\
\vdots \\
Y_2[t_{L+1}]
\end{array}\right] = \mathbf{W} \left[\begin{array}{c}
{\displaystyle \sum_{\ell=1}^L h_{2 \ell}[t_1]X_{\ell 1}} \\
h_{2L}[t_1] X_{L2} \\
\vdots \\
h_{2 1 }[t_1] X_{12} 
\end{array}\right] + \left[\begin{array}{c}
Z_2[t_1] \\
Z_2[t_2] \\
\vdots \\
Z_2[t_{L+1}]
\end{array}\right] \nonumber \ . 
\end{align} As in Section \ref{s:multicast}, each desired signal is assigned to a unique DFT vector and all the interfering terms are grouped into the remaining vector. Following the steps of the proof of Theorem \ref{t:multicast}, we can arrive at the following theorem.

\begin{theorem} \label{t:xchannel}
For the X message set with $M = 2$ receivers, the following rates are achievable over the time-varying Gaussian interference channel defined in Section \ref{s:probstateic},
\begin{align}
R_{\ell m } = \frac{1}{L+1}\E\bigg[\log\bigg(1 +  \frac{(L+1) |h_{k \ell}|^2P}{2}\bigg)\bigg] \ . 
\end{align} 
\end{theorem}

\begin{remark} Unfortunately, the scheme above does not directly generalize to $L > 2$ receivers. The key issue is that each symbol travels through an effective channel to each receiver, with phases determined by our channel matching scheme. In the interference channel, these phases can be set to arbitrarily values. For the X channel, there are $LK$ symbols that are each seen by $K$ receivers. If we demand specific phases for each effective channel from symbol to receiver, we will end up with $LK^2$ constraints. Each transmitter can pre-multiply the symbols by phases, leading to $LK$ free variables, and we can wait for phases on the $LK$ channel gains. Overall, we have $LK^2$ constraints and $2LK$ free variables, meaning that the problem becomes overconstrained when $L > 2$. 
\end{remark}

\section{Time-Varying Finite Field Interference Channel} \label{s:finitefield}
 
For the Gaussian case, it is sufficient to match up channel matrices and add up the resulting channel outputs. The simplicity of this strategy is in some ways an artifact of the Gaussian setting. In general, the receivers may need to perform a decoding step prior to combining the observed signals to avoid noise build-up. In this section, we consider a finite field interference channel with fast fading and derive the entire capacity region. Each receiver groups together time instances with the same channel coefficients and decodes a function of the messages, using a linear code. By combining two appropriately chosen functions, the interference can be completely removed.

The problem statement is identical to that in Section \ref{s:probstateic} except for the channel model. We assume that all operations are carried out over a finite field $\mathbb{F}_q$. Let $\oplus$ and $\bigoplus$ denote addition and summation over $\mathbb{F}_q$, respectively.

\begin{definition}[Channel Model]
We assume that the channel inputs and outputs take values on the same finite field $\mathbb{F}_q$. The channel output observed by each receiver is a noisy linear combination of its inputs:
\begin{align}
Y_k[t] = \bigoplus_{\ell=1}^K{h_{k \ell}[t] X_\ell[t]} \oplus Z_k[t]
\end{align} where the $h_{k\ell}[t]$ are time-varying channel coefficients and $Z_k[t]$ is additive i.i.d. noise drawn from a distribution that takes values uniformly on $\{1,2,\ldots,q-1 \}$ with probability $\nu$ and is zero otherwise. The entropy of this distribution is 
$$H(Z) = - \nu \log{\nu} - (1- \nu) \log{(1 - \nu)} + \nu \log{(q-1)} \ .$$

\begin{remark} The assumed symmetry of the noise distribution across its non-zero values plays an important role in our capacity proof. That is, our outer bound relies on the fact that scaling the noise by a non-zero number does not alter its distribution.
\end{remark}

 We assume that at each time step each channel coefficient is drawn independently and uniformly from $\mathbb{F}_q \setminus \{0\}$. The transmitters and receivers are given access to the channel realizations causally. That is, before time $t$, each transmitter and receiver is given $h_{k\ell}[t]$ for all $k$ and $\ell$. Let $\mathbf{H}[t] = \{h_{k\ell}[t]\}$ denote the matrix of channel coefficients.
\end{definition}

\begin{remark} Using counting arguments, we can extend our results to the case where the channel coefficients are allowed to equal zero with some probability. However, this considerably complicates the description of the capacity region. 
\end{remark}

The basic idea underlying our scheme is to add together two well-chosen channel outputs such that the interference exactly cancels out. As before, we can will match a channel matrix $\mathbf{H}$ with a complementary matrix
\begin{align}
g(\mathbf{H}) &\triangleq 
\left[
\begin{array}{cccc}
 1\oplus(-h_{11}) & -h_{12}  & \cdots& -h_{1K}  \\
 -h_{21} & 1\oplus(-h_{22})  & \cdots& -h_{2K}  \\
 \vdots & \vdots & \ddots & \vdots \\
 -h_{K1} & -h_{K2} & \cdots & 1\oplus(-h_{KK})
\end{array}
\right]  \nonumber
\end{align} so that $\mathbf{H} \oplus g(\mathbf{H}) = \mathbf{I}$. However, for the finite field model, if we directly sum up the observations from a given channel matrix and its complement, we will accumulate noise. As it turns out, it is better to group together time slots based on their channel realization and send a \textit{linear function} of the messages to each receiver using a linear code. This technique, sometimes referred to as computation coding \cite{ng07IT}, is reviewed in detail in Appendix \ref{s:compute}. We then match up linear functions so that the receivers can solve for their desired messages.

Since the channel coefficients are drawn from a discrete alphabet, we can define typicality without resorting to quantization. Assume that the $T$ channel uses are split into two consecutive blocks of equal length. Let 
\begin{align}
\#\Big(\mathsf{H}|\mathbf{H}^{(n)}\Big) \triangleq \bigg|\bigg\{ t : \mathbf{{H}}[t] = \mathsf{{H}},~1+ \frac{(n-1)T}{2}  \leq t \leq \frac{nT}{2} \bigg\}\bigg| \nonumber
\end{align} be the number of channel matrices within the $n^{\text{th}}$ block that are equal to $\mathsf{{H}} \in \mathbb{F}^{K \times K}_q$. The definition of $\gamma$-typicality is the same as that given in Definition \ref{d:typical}. From Lemma \ref{l:typical}, it follows that, for any $\epsilon > 0$ and $T$ large enough, both blocks are $\gamma$-typical with probability at least $1-\epsilon$.

If a transmitter-receiver pair had the channel to itself, it can achieve an interference-free rate of $\log{q} - H(Z)$. We will now show that all users can achieve half the interference-free rate simultaneously.
\begin{theorem}\label{t:fieldsymmetric}
For the time-varying finite field interference channel, the following rates are achievable 
\begin{align}
R_k = \frac{1}{2}(\log{q} - H(Z)) \ .
\end{align} 
\end{theorem}
\begin{proof}
For any $\epsilon > 0$, let $\gamma$ be a small positive constant that will be chosen later to satisfy our rate requirement. Using Lemma \ref{l:typical}, choose $T$ large enough such that $\mathbf{H}^{(1)}$ and $\mathbf{H}^{(2)}$ are both $\gamma$-typical with probability $1-\frac{\epsilon}{2}$. Let $\mathcal{F} = \{ \mathbb{F}_q \setminus \{0\}\}^{K \times K}$ denote the channel matrix alphabet. Now, condition on the event that both blocks are $\gamma$-typical. Since the channel coefficients are i.i.d. and uniform, the probability of any channel $\mathsf{H} \in \mathcal{F}$ is $| \mathcal{F} |^{-1}$. Since $\mathbf{H}^{(n)}$ is $\gamma$-typical we have that for every $\mathsf{H} \in \mathcal{F}$:
\begin{align}
\frac{T}{2} \left(\frac{1}{|\mathcal{F}|}-\gamma\right)  \leq \#(\mathsf{H}|\mathbf{H}^{(n)}) \leq   \frac{T}{2}\left(\frac{1}{|\mathcal{F}|}+\gamma\right)  \ . \end{align} 

Let $\mathcal{T}^{(n)}_{\mathsf{H}}$ denote the first $ \frac{T}{2}(| \mathcal{F} |^{-1}-\gamma )$ time indices from the $n^{\text{th}}$ block with channel realization $\mathsf{H} \in \mathcal{F}$. We will ignore all other time slots which reduces the rate by at most a factor $(1 -\gamma)$. Each transmitter splits its message into many distinct chunks, one for each channel realization $\mathsf{H}$. Let $\mathbf{w}_{\ell\mathsf{H}} \in \mathbb{F}_q^\kappa$ be the chunk intended for realization $\mathsf{H}$. Assuming the chunks are all $\gamma$-typical, the length of each chunk is  
\begin{align}
\kappa =  \frac{T}{2}\bigg(\frac{1}{|\mathcal{F}|}-\gamma \bigg)\frac{\log q - H(Z) - \epsilon/2}{\log q} \ .
\end{align}

Using the computation code described in Appendix \ref{s:compute}, each transmitter $\ell$ sends its message $\mathbf{w}_{\ell\mathsf{H}}$ during the time indices in $\mathcal{T}^{(1)}_{\mathsf{H}}$. Receiver $k$ makes an estimate $\mathbf{\hat{u}}_{k\mathsf{H}}$ of 
$$\mathbf{u}_{k\mathsf{H}} =\bigoplus_{\ell=1}^K{h_{k\ell} \mathbf{w}_{\ell\mathsf{H}}} \ .$$ Each transmitter then employs a computation code with the \textit{same messages} $\mathbf{w}_{\ell \mathsf{H}}$ over the time indices $\mathcal{T}^{(2)}_{g(\mathsf{H})}$ in the second block corresponding to the complementary matrix $g(\mathsf{H})$. Receiver $k$ then makes an estimate $\mathbf{\hat{v}}_{k\mathsf{H}}$ of
$$\mathbf{v}_{k\mathsf{H}} =(1 \oplus (-h_{kk}))\mathbf{w}_{k\mathsf{H}} \oplus \left(- \bigoplus_{\ell \neq k}{h_{k\ell} \mathbf{w}_{\ell \mathsf{H}}}\right)\ .$$ By Lemma \ref{l:compute}, for $T$ large enough, the total probability of error for all computation codes is upper bounded by $\epsilon/2$. 

After collecting these (estimates of) linear functions, receiver $k$ makes an estimate of $\mathbf{w}_{k\mathsf{H}} $ by simply adding up the two equations to get $$\mathbf{\hat{w}}_{k\mathsf{H}} = \mathbf{\hat{u}}_{k\mathsf{H}} \oplus \mathbf{\hat{v}}_{k\mathsf{H}} \ . $$ 

The total number of bits encoded into the chunks across all $| \mathcal{F}|$ channel realizations is
\begin{align}
\frac{T}{2}\big(1 - | \mathcal{F}| \gamma \big)(\log q - H(Z) - \epsilon/2) \ . 
\end{align} Normalizing by $T$ and taking $\gamma$ small enough, the rate per transmitter is $\frac{1}{2}(\log{q} - H(Z)) - \epsilon$. The probability that either block is atypical is less than $\epsilon/2$ and the probability of error over the computation code is less than $\epsilon/2$ for $T$ large enough so the total probability of error is less than $\epsilon$ as desired. 
\end{proof}

We now use the scheme from Theorem \ref{t:fieldsymmetric} to establish the following achievable rate region.

\begin{theorem}\label{t:fieldalignment}
For the time-varying finite field interference channel, any rate tuple $(R_1,\ldots, R_K)$, satisfying the following inequalities is achievable:
\begin{align}
R_\ell + R_k \leq \log{q} - H(Z),~~~~\forall k \neq \ell. \label{e:fieldachievable}
\end{align}
\end{theorem} First, we will give an equivalent description of this rate region and then show that any rate tuple can be achieved by time sharing the symmetric rate point from Theorem \ref{t:fieldsymmetric} and a single user transmission scheme.

\begin{lemma}\label{l:fieldequivalent}
Assume, without loss of generality, that the users are labeled according to rate in descending order, so that \mbox{$R_1 \geq R_2 \geq \cdots \geq R_K$}. The achievable rate region from Theorem \ref{t:fieldalignment} is equivalent to the following rate region:
\begin{align}
R_1 &\leq \log{q} - H(Z) \\
R_k & \leq \min\Big(\log{q} - H(Z) - R_1, \frac{1}{2}\big(\log{q} - H(Z)\big) \Big),~k \geq 2\nonumber\end{align}
\end{lemma}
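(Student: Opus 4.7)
The plan is to prove the two rate regions are equal by showing mutual set containment, writing $C := \log_2 q - H(Z)$ throughout to reduce clutter. The ordering assumption $R_1 \geq R_2 \geq \cdots \geq R_K$ is what makes the pairwise-sum description collapse to a single constraint on $R_1$ plus a uniform cap on the remaining rates, so both directions will rely heavily on it.

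For the forward direction, I assume $(R_1,\ldots,R_K)$ satisfies $R_k + R_\ell \leq C$ for all $k \neq \ell$. First I extract $R_1 \leq C$ by combining $R_1 + R_2 \leq C$ with $R_2 \geq 0$. Next, for any $k \geq 2$, monotonicity gives $R_k \leq R_2$, and the constraint $R_1 + R_2 \leq C$ yields $R_k \leq C - R_1$. Finally, for any pair $k, \ell \geq 2$ with $k \neq \ell$ (which requires $K \geq 3$; the $K=2$ case will be handled separately and trivially), I use $R_k, R_\ell \leq R_2$ together with $R_k + R_\ell \leq C$ to obtain $2 R_2 \leq C$, hence $R_k \leq R_2 \leq C/2$. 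Combining the two upper bounds gives $R_k \leq \min(C - R_1,\, C/2) \cdot 1 = \beta C$, which is precisely the second description.

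For the reverse direction, I assume the second description holds and verify every pairwise sum constraint. The definition of $\beta$ as a minimum immediately yields $\beta \leq 1 - R_1/C$, i.e.\ $\beta C \leq C - R_1$. Thus for any $\ell \geq 2$,
\begin{align}
R_1 + R_\ell \;\leq\; R_1 + \beta C \;\leq\; R_1 + (C - R_1) \;=\; C,
\end{align}
handling all pairs involving user $1$. For any pair $k, \ell \geq 2$ with $k \neq \ell$, the other side of the minimum gives $\beta \leq 1/2$, so
\begin{align}
R_k + R_\ell \;\leq\; 2 \beta C \;\leq\; C,
\end{align}
completing the verification.

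This lemma is essentially a convex-geometry exercise, so I do not expect a substantial obstacle; the only subtlety worth flagging is making sure the ordering $R_1 \geq R_2$ is actually invoked in both bounds on $R_k$ for $k \geq 2$ (it is what lets $R_2$ serve as a uniform surrogate), and that the two arms of the $\min$ defining $\beta$ correspond precisely to the two structurally different families of pairwise constraints: those pairing user $1$ with someone else, and those pairing two non-leaders. Once this correspondence is made explicit, the equivalence reads off almost mechanically.
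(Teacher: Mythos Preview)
Your overall plan---mutual containment with $C := \log_2 q - H(Z)$---is sound, and your reverse direction is clean. However, the forward-direction step that produces $R_k \leq C/2$ for $k \geq 2$ contains a genuine logical slip. You write that from $R_k, R_\ell \leq R_2$ (with $k,\ell \geq 2$, $k \neq \ell$) together with $R_k + R_\ell \leq C$ you obtain $2R_2 \leq C$. That inference runs the wrong way: knowing that two quantities are each at most $R_2$ and that their sum is at most $C$ tells you nothing about $2R_2$. Concretely, $R_3 + R_4 \leq C$ with $R_3, R_4 \leq R_2$ is perfectly compatible with $2R_2 > C$ if taken in isolation.

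The fix is simpler than what you attempted and does not require $K \geq 3$: use the ordering at the \emph{top}, not among the lower users. Since $R_1 \geq R_2$, the single constraint $R_1 + R_2 \leq C$ already gives
\[
2R_2 \;\leq\; R_1 + R_2 \;\leq\; C,
\]
hence $R_2 \leq C/2$ and therefore $R_k \leq R_2 \leq C/2$ for every $k \geq 2$. This is exactly the observation the paper uses (phrased there as ``only one user can exceed $C/2$''), and it makes your separate treatment of $K=2$ unnecessary. With this correction your argument is complete and in fact more explicit than the paper's, which only sketches the forward implication via a case split on whether $R_1 > C/2$ and leaves the reverse containment implicit.
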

\begin{proof}
The key idea is that only one user can achieve a rate higher than $\frac{1}{2}(\log{q} - H(Z) )$. From (\ref{e:fieldachievable}), we must have that $R_1 + R_k \leq \log{q} - H(Z) $ so if $R_1 > \frac{1}{2}(\log{q} - H(Z) )$ all other users must satisfy $R_k \leq \log{q} - H(Z) - R_1$. If $R_1 \leq \frac{1}{2}(\log{q} - H(Z) )$, then we have that $R_k \leq \frac{1}{2}(\log{q} - H(Z) )$ for all other users since the rates are in descending order.
\end{proof}
\begin{proof}[Proof of Theorem \ref{t:fieldalignment}]
We show that the equivalent rate region developed by Lemma \ref{l:fieldequivalent} is achievable by time-sharing. First, we consider the case where $R_1 > \frac{1}{2}(\log{q} - H(Z))$. Let 
\begin{align}
\beta = 2\left(1 - \frac{R_1}{\log{q} - H(Z)}\right) \ . 
\end{align} We allocate $\beta T$ channel uses to the symmetric scheme from Theorem \ref{t:fieldsymmetric}. For, the remaining $(1- \beta)T$ channel uses, users $2$ through $K$ are silent, and user $1$ employs a capacity-achieving point-to-point channel code. This results in user $1$ achieving its target rate $R_1$:
\begin{align}
&\frac{\beta(\log{q} - H(Z))}{2} + (1-\beta)(\log{q} - H(Z))\\
&=\log{q} - H(Z) - R_1 - \log{q} + H(Z) + 2 R_1 = R_1
\end{align} and users $2$ through $K$ achieving $R_k = \log{q} - H(Z) - R_1$. If $R_1 \leq \frac{1}{2}(\log{q} - H(Z))$, we can achieve any rate point with the use of the symmetric scheme from Theorem \ref{t:fieldsymmetric}. 
\end{proof}

Finally, we will give an upper bound using the techniques in \cite{cj08} to show that the achievable rate region in Theorem \ref{t:fieldalignment} is the capacity region.

\begin{theorem}
For the time-varying finite field interference channel, the capacity region is the set of all rate tuple $(R_1, \ldots, R_K)$ satisfying
\begin{align}
R_\ell + R_k \leq \log{q} - H(Z),~~~~\forall k \neq \ell.
\end{align}
\end{theorem}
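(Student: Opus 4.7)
Since Theorem~\ref{t:fieldachievable} already supplies the achievability direction, the plan is to prove only the converse: every achievable rate tuple must satisfy $R_k + R_\ell \leq \log_2 q - H(Z)$ for each pair $k \neq \ell$. The argument will reduce the $K$-user channel to a $2$-user IC via a genie, verify the Sato--Han strong-interference hypothesis in the ergodic finite-field setting, and then apply the MAC sum bound at one receiver.

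Fix $k \neq \ell$ and grant a genie that reveals $\{w_j : j \notin \{k, \ell\}\}$ to both receivers $k$ and $\ell$. Since extra side information cannot shrink the capacity region, any outer bound derived for the resulting $2$-user problem remains valid for the original $K$-user channel. After subtracting the now-known interfering codewords, the effective $2$-user channel is
\[
\tilde Y_k(t) = h_{kk}(t) X_k(t) + h_{k\ell}(t) X_\ell(t) + Z_k(t),
\]
\[
\tilde Y_\ell(t) = h_{\ell k}(t) X_k(t) + h_{\ell \ell}(t) X_\ell(t) + Z_\ell(t),
\]
with all four relevant channel coefficients i.i.d.\ uniform on $\mathbb{F}_q \setminus \{0\}$.

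The structural crux is that the noise $Z$ is invariant in distribution under multiplication by any nonzero field element: $Z$ places mass $1-\rho$ at $0$ and is uniform on $\mathbb{F}_q \setminus \{0\}$ with total mass $\rho$, and multiplication by $h \neq 0$ permutes $\mathbb{F}_q \setminus \{0\}$, so $hZ$ and $Z$ have the same distribution. Consequently, for any input distribution and any realization of $\mathbf{H}$,
\[
I(X_k; \tilde Y_k \mid X_\ell, \mathbf{H}) = H(h_{kk} X_k + Z) - H(Z) = H(X_k + Z) - H(Z) = I(X_k; \tilde Y_\ell \mid X_\ell, \mathbf{H}),
\]
and similarly with $k$ and $\ell$ swapped. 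These pointwise equalities furnish the Sato--Han strong-interference hypothesis in the present ergodic model.

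Under strong interference, the cross-link from $X_\ell$ to receiver $k$ (after $X_k$ is removed) is distributionally indistinguishable from the direct link at receiver $\ell$, so any code that receiver $\ell$ decodes at rate $R_\ell$ can also be decoded at receiver $k$ once $w_k$ is recovered. A standard Fano chain then gives $n(R_k + R_\ell) \leq I(w_k, w_\ell; \tilde Y_k^n \mid \mathbf{H}^n) + n\epsilon_n$, which single-letterizes to $R_k + R_\ell \leq E_{\mathbf{H}}[I(X_k, X_\ell; \tilde Y_k \mid \mathbf{H})]$. Since $\tilde Y_k \in \mathbb{F}_q$, we have $H(\tilde Y_k \mid \mathbf{H}) \leq \log_2 q$, and the bound $R_k + R_\ell \leq \log_2 q - H(Z)$ follows. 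The main obstacle is formalizing the ``receiver $k$ can also decode $w_\ell$'' step: the noise symmetry from the previous paragraph is exactly what is needed to couple the direct and cross channels for $w_\ell$, but stating this coupling cleanly in the ergodic-CSI setting (where the input distribution may depend on $\mathbf{H}$) is the delicate part of the argument.
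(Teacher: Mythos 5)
Your outline is essentially the paper's own argument, and the gap you flag at the end is exactly the step the paper fills in. The paper also uses a genie to hand $w_3,\dots,w_K$ to both receivers, and the same noise-scaling invariance that you identify. The difference is how the coupling is carried out: rather than invoking Sato--Han strong interference as a black box, the paper exploits the fact that the capacity region depends only on the noise \emph{marginals}, and so sets, time-by-time, $Z_1(t) = h_{12}(t)\,h_{22}(t)^{-1} Z_2(t)$. With that choice, the scaled genie observation $\tilde Y_2(t) = h_{12}(t)h_{22}(t)^{-1}Y_2(t)$ satisfies
\[
\tilde Y_2(t) - h_{12}(t)h_{21}(t)h_{22}(t)^{-1}X_1(t) \;=\; h_{12}(t)X_2(t) + Z_1(t),
\]
which is literally a function of $Y_1(t)$ once $X_1(t)$ is known. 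This turns the entire Fano chain into a string of exact identities (no strong-interference inequality is needed):
\[
n(R_1 + R_2 - \epsilon_n) \le I(w_1;Y_1^n) + I(w_2;\tilde Y_2^n \mid w_1,X_1^n) = I(w_1,w_2;Y_1^n) \le n(\log_2 q - H(Z)).
\]
So the per-symbol noise coupling is precisely the ``clean'' resolution of the delicate step you describe; it works uniformly across channel realizations because the scaling factor $h_{12}(t)h_{22}(t)^{-1}$ is a nonzero field element for every $t$, and hence the re-defined $Z_1(t)$ has the required marginal distribution at every time step.
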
 
\begin{proof}
The required upper bound follows from steps similar to those in Appendix II of \cite{cj08}. Without loss of generality, we upper bound the rates of users $1$ and $2$. Note that the capacity of the interference channel only depends on the noise marginals. Thus, we can assume that $Z_1[t] = h_{12}[t](h_{22}[t])^{-1} Z_2[t]$ due to the symmetry of the noise distribution. Multiplying $Y_2[t]$ by a non-zero factor does not change the capacity so let $\tilde{Y}_2[t] = h_{12}[t](h_{22}[t])^{-1} Y_{2}[t]$. 

We give the receivers full access to the messages from users $3$ through $K$ as this can only increase their respective rates. Assume that the corresponding signals $X_3[t],\ldots,X_K[t]$ have been eliminated from $Y_1[t]$ and $\tilde{Y}_2[t]$ below. We also give receiver $2$ access to $m_1$. Let $\epsilon_T = 1 + (R_1 + R_2)p_{\text{error}} $ where $p_{\text{error}}$ is the probability of error.  From Fano's inequality, we have that $T(R_1 + R_2)$ is upper bounded as follows:
\begin{align}
&T(R_1 + R_2) \nonumber \\
&\leq  I\Big(m_2;m_1,\{\tilde{Y}_2[t]\}_{t=1}^T\Big)+ I\Big(m_1;\{Y_1[t]\}_{t=1}^T\Big)  +T\epsilon_T \nonumber \\
&=  I\Big(m_2;\{\tilde{Y}_2[t]\}_{t=1}^T \Big| m_1 \Big)+ I\Big(m_1;\{Y_1[t]\}_{t=1}^T\Big)  +T\epsilon_T \nonumber \\
&=  I\Big(m_2;\{h_{12}[t] X_2[t] \oplus Z_1[t]\}_{t=1}^T \Big| m_1 \Big) \nonumber \\
& \qquad+ I\Big(m_1;\{Y_1[t]\}_{t=1}^T\Big)  +T\epsilon_T\\
&=  I\Big(m_2;\{h_{11}[t]X_1[t] \oplus h_{12}[t] X_2[t] \oplus Z_1[t]\}_{t=1}^T \Big| m_1 \Big) \nonumber \\
& \qquad+ I\Big(m_1;\{Y_1[t]\}_{t=1}^T\Big)  +T\epsilon_T\\
&=  I\Big(m_2;\{Y_1[t]\}_{t=1}^T \Big| m_1 \Big) + I\Big(m_1;\{Y_1[t]\}_{t=1}^T\Big)  +T\epsilon_T \nonumber \\
&=  I\Big(m_1,m_2;\{Y_1[t]\}_{t=1}^T  \Big)   +T\epsilon_T\\
&\leq T\big(\log{q} - H(Z)\big)  +T\epsilon_T
\end{align} As the probability of error $p_{\text{error}}$ tends to zero, $\epsilon_T \rightarrow 0$ which yields $R_1 + R_2 \leq \log{q} - H(Z)$. Similar outer bounds hold for all receiver pairs $\ell$ and $k$. Comparing these to the achievable region in Theorem \ref{t:fieldalignment} yields the capacity region.
\end{proof}

\section{Conclusions}

In this paper, we proposed a new scheme, ergodic interference alignment, for time-varying interference channels.  
Overall, this scheme shows how much can be gained by coding over parallel interference channels. While in the Gaussian case, we can simply add up two well-matched channel outputs, in general, we can think about this alignment scheme as organizing the computations naturally provided by the channel. 

An interesting subject for future study is the inclusion of ergodic interference alignment into classical power allocation and Han-Kobayashi message-splitting strategies. That is, the optimal scheme will most likely have each transmitter split its message into several parts. Channel realizations will then have to be grouped according to which messages should be treated as noise, decoded, or aligned by each receiver.

\appendices

\section{Outer Bound} \label{s:outer}

We now develop an upper bound that is applicable when the receivers want to decode one or more messages over a time-varying Gaussian interference channel (the setting of Sections \ref{s:ergodic} and \ref{s:multicast}). The proof closely follows the multiple-access outer bound used in \cite{cj08}.

Assume, without loss of generality, that receiver $k$ wants to recover $m_1,\ldots, m_M$ and that receiver $n$ wants to recover (at least) $m_{M+1}$. Now, give receivers $k$ and $n$ the messages $m_{M+2},\ldots,m_L$ as genie-aided side information, which can only increase the rates $R_1,\ldots, R_{M+1}$. Both receivers can now completely remove the effects of $X_{M+2}[t], \ldots,X_{L}[t]$ from their observations. We also assume that $h_{k\ell}[t] \neq 0$ for $t =1,\ldots, T$. This occurs with probability $1$ for many fading distributions of interest. Finally, note that scaling the channel output at a receiver cannot change the capacity. Overall, we can assume that receivers $k$ and $n$ have access to the channel observations 
\begin{align}
\tilde{Y}_k[t] &= \sum_{\ell = 1}^{M+1} h_{k\ell}[t] X_\ell[t] + Z_k[t] \\
\tilde{Y}_n[t] &= \frac{h_{k,M+1}[t]}{h_{n,M+1}[t]}\bigg(\sum_{\ell = 1}^{M+1} h_{n\ell}[t] X_\ell[t] \bigg) + \tilde{Z}_n[t] 
\end{align} where
\begin{align}
\tilde{Z}_n[t] = \frac{h_{k,M+1}[t]}{h_{n,M+1}[t]} Z_n[t] \ . 
\end{align}

Since the receivers cannot cooperate, the capacity only depends on the noise marginals. It is useful to assume that the noise terms $Z_k[t]$ and $\tilde{Z}_n[t]$ are generated in a correlated fashion at each time step. Define
\begin{align}
\alpha[t] \triangleq \min\bigg(1,~ \frac{|h_{k,M+1}[t]|^2}{|h_{n,M+1}[t]|^2}\bigg)
\end{align} as well as the following independent noise processes
\begin{align}
\bar{Z}[t] &\sim \mathcal{CN}(0,\alpha[t])\\
\bar{Z}_k[t] &\sim \mathcal{CN}(0,1-\alpha[t])\\
\bar{Z}_n[t] &\sim \mathcal{CN}\bigg(0,\frac{| h_{k,M+1}[t] |^2}{| h_{n,M+1}[t] |^2}-\alpha[t]\bigg) \ .
\end{align} that are each i.i.d. across time. We combine these to create the correlated noise terms at the receivers
\begin{align}
Z_k[t] &= \bar{Z}[t] + \bar{Z}_k[t] \\
\tilde{Z}_n[t] &= \bar{Z}[t] + \bar{Z}_n[t]
\end{align}

We will also give $m_1,\ldots, m_M$ to receiver $n$ as genie-aided side-information. Define 
\begin{align}
\epsilon_T \triangleq 1 +p_{\text{error}} \sum_{k=1}^{M+1}R_k 
\end{align} where $p_{\text{error}}$ is the average probability of error. Via Fano's inequality, it follows that
\begin{align}
&T \sum_{k=1}^{M+1} R_k \nonumber \\
&\leq I\Big(m_{M+1};\big\{\tilde{Y}_n[t]\big\}_{t=1}^T, m_1,\ldots, m_M\Big) \nonumber \\
&\qquad +  I\Big(m_1,\ldots,m_M; \big\{\tilde{Y}_k[t]\big\}_{t=1}^T\Big) + T \epsilon_T \\ 
&=  I\Big(m_{M+1};\big\{\tilde{Y}_n[t]\big\}_{t=1}^T \big| m_1,\ldots, m_M\Big)  \nonumber\\
&\qquad + I\Big(m_1,\ldots,m_M; \big\{\tilde{Y}_k[t]\big\}_{t=1}^T\Big) + T \epsilon_T \\
&= I\Big(m_{M+1};\big\{h_{k,M+1}[t] X_{M+1}[t]  + \tilde{Z}_n[t]\big\}_{t=1}^T \big| m_1,\ldots, m_M\Big) \nonumber  \\
&\qquad + I\Big(m_1,\ldots,m_M; \big\{\tilde{Y}_k[t]\big\}_{t=1}^T\Big) + T \epsilon_T \\
&= I\Bigg(m_{M+1};\bigg\{ \sum_{\ell=1}^{M+1}h_{k\ell}[t] X_{\ell}[t]  + \tilde{Z}_n[t]\bigg\}_{t=1}^T \bigg| m_1,\ldots, m_M\Bigg) \nonumber  \\
&\qquad + I\Big(m_1,\ldots,m_M; \big\{\tilde{Y}_k[t]\big\}_{t=1}^T\Big) + T \epsilon_T
\end{align}

Now, we weaken the noise by giving receivers $\bar{Z}_k[t]$ and $\bar{Z}_n[t]$ as side information. Let 
\begin{align}
\bar{Y}_k[t] = \sum_{\ell=1}^{M+1}h_{k\ell}[t] X_{\ell}[t]  + \bar{Z}[t] \ .
\end{align} It follows that

\begin{align}
T \sum_{k=1}^{M+1} R_k &\leq I\Big(m_{M+1}; \big\{\bar{Y}_k[t]\big\}_{t=1}^T \big| m_1,\ldots,m_M \Big) \nonumber \\
&\qquad + I\Big(m_1,\ldots,m_M; \big\{\bar{Y}_k[t]\big\}_{t=1}^T\Big) + T \epsilon_T \\
&=  I\Big(m_1,\ldots,m_{M+1}; \big\{\bar{Y}_k[t]\big\}_{t=1}^T\Big) + T \epsilon_T  \ .
\end{align} Now, applying the usual steps, we can show that the mutual information expression is maximized by independent Gaussian inputs.

Assume that all transmitters employ a uniform power allocation across time. Specializing the upper bound above to the $K$-user interference channel from Section \ref{s:ergodic} (and taking $T \rightarrow \infty$), we get that
\begin{align}
R_{\ell} + R_k \leq \E\left[ \log\left(1 + \frac{( | h_{k\ell}|^2 + |h_{kk}|^2 )P}{\min\big(1, \frac{|h_{k\ell}|^2}{|h_{\ell \ell}|^2}\big)} \right) \right] \label{e:icupper}
\end{align} for all $k = 1,2,\ldots, K$ and $\ell \neq k$.

Specializing to the case in Section \ref{s:multicast}, where receiver $k$ wants the messages $m_{\ell}$ for $\ell \in \mathcal{S}_k$, we get that 
\begin{align}
&R_{i} + \sum_{\ell \in \mathcal{S}_k}R_\ell \label{e:multicastupper} \nonumber\\ 
&\leq \E\left[ \log\left(1 + \frac{\Big( | h_{ki}|^2 + \sum_{\ell\in \mathcal{S}_k}|h_{k\ell}|^2 \Big)P}{\min\big(1, \frac{|h_{ki}|^2}{|h_{ni}|^2}\big)} \right) \right]
\end{align} for all $i$ such that $i \in \mathcal{S}_n$ and $i \notin \mathcal{S}_k$.

\section{Computation Coding} \label{s:compute} 

We now review the computation coding scheme from \cite{ng07IT} for finite field channels. Assume that there are $\ell$ transmitters, each with a message $\mathbf{w}_{\ell} \in \mathbb{F}_q^{\kappa}$. Each transmitter maps its message into a length $\tau$ codeword $\mathbf{x}_\ell \in \mathbb{F}_q^\tau$.

Receiver $k$ observes a noisy linear combination of the codewords
\begin{align}
\mathbf{y}_k = \bigoplus_{\ell=1}^K h_{k\ell} \mathbf{x}_\ell \oplus \mathbf{z}_k
\end{align} where $\mathbf{z}_k$ is a noise vector whose elements are i.i.d. according to a distribution with entropy $H(Z)$. Each receiver would like to make an estimate $\mathbf{\hat{u}}_k$ of a linear equation of the messages
\begin{align}
\mathbf{u}_k = \bigoplus_{\ell=1}^K h_{k\ell} \mathbf{w}_\ell \ . 
\end{align} 
The following lemma states an achievable \textit{computation rate} for this setting.
\begin{lemma} \label{l:compute}
For any $\epsilon > 0$ and $\tau$ large enough, there exists a set of encoders and decoders such that all receivers can make estimates $\mathbf{\hat{u}}_k$ of the linear equations $\mathbf{u}_k$ with total probability of error
\begin{align}
\P\bigg(\bigcup_{k=1}^K \{ \mathbf{\hat{u}}_k \neq \mathbf{u}_k \} \bigg) < \epsilon 
\end{align} so long as the rate $\kappa/\tau$ satisfies
\begin{align}
\frac{\kappa}{\tau} < \frac{\log q - H(Z)}{\log q} \ . 
\end{align}
\end{lemma}
\begin{IEEEproof}
First, we find a linear code with generator matrix $\mathbf{G} \in \mathbb{F}_q^{\tau \times k}$ with rate $\kappa/\tau < (\log q - H(Z))/\log q$ and probability of error at most $\epsilon/K$ over the channel 
\begin{align}
\mathbf{y} = \mathbf{x} \oplus \mathbf{z}
\end{align} where $\mathbf{z}$ has the same distribution as $\mathbf{z}_k$ and $\mathbf{x} = \mathbf{G} \mathbf{w}$. Each encoder employs $\mathbf{G}$ to get $\mathbf{x}_\ell = \mathbf{G} \mathbf{w}_\ell$. As a result, each receiver sees
\begin{align}
\mathbf{y}_k & =  \bigoplus_{\ell=1}^K h_{k\ell} \mathbf{G} \mathbf{w}_\ell \oplus \mathbf{z}_k \\
&=  \mathbf{G}\bigg( \bigoplus_{\ell=1}^K h_{k\ell} \mathbf{w}_\ell\bigg) \oplus \mathbf{z}_k \\
&= \mathbf{G} \mathbf{u}_k \oplus \mathbf{z}_k 
\end{align} from which it can decode $\mathbf{u}_k$ with probability of error at most $\epsilon/K$. By the union bound, the total probability of error is at most $\epsilon$. 
\end{IEEEproof} Via standard cut-set arguments, it can also be shown that this is the computation capacity.

\section*{Acknowledgment}

The authors would like to thank the anonymous reviewers whose suggestions improved the presentation of this work.

\bibliographystyle{ieeetr}

\end{document}